\documentclass[11pt]{article}

\usepackage[margin=1in]{geometry}
\usepackage{authblk}
\usepackage[colorlinks=true,allcolors=blue]{hyperref}

\usepackage{graphicx}
\usepackage{amsmath}
\usepackage{amsthm}
\usepackage{amssymb}
\usepackage{mathtools}
\usepackage{natbib}
\usepackage{booktabs}
\usepackage{array}
\usepackage{multirow}
\usepackage{tabularx}
\usepackage{adjustbox}
\usepackage{subfig}
\usepackage{xcolor}
\usepackage{algorithm}
\usepackage{algpseudocode}
\usepackage{listings}
\usepackage{caption}
\graphicspath{{Fig/}}

\theoremstyle{plain}
\newtheorem{theorem}{Theorem}
\newtheorem{proposition}[theorem]{Proposition}%
\theoremstyle{definition}

\theoremstyle{remark}
\newcommand\mc{\mathcal}

\newcommand\bs{\boldsymbol}
\newcommand{\change}[1]{#1}
\begin{document}




\title{A Unified and Computationally Efficient Non-Gaussian Statistical Modeling Framework}

\author[1]{David Bolin\thanks{The authors are ordered alphabetically.}}
\author[1]{Xiaotian Jin\thanks{Corresponding author: xiaotian.jin@kaust.edu.sa}}
\author[1]{Alexandre B. Simas}
\author[2]{Jonas Wallin}

\affil[1]{CEMSE, King Abdullah University of Science and Technology, Thuwal, Saudi Arabia}
\affil[2]{Department of Statistics, Lund University, Lund, Sweden}

\date{}



\maketitle

\abstract{Datasets that exhibit non-Gaussian characteristics are common in many fields, while the current modeling framework and available software for non-Gaussian models is limited.
We introduce Linear Latent Non-Gaussian Models (LLnGMs), a unified and computationally efficient statistical modeling framework that extends a class of latent Gaussian models to allow for latent non-Gaussian processes.
The framework unifies several popular models, from simple temporal models to complex spatial-temporal and multivariate models, facilitating natural non-Gaussian extensions.
Computationally efficient Bayesian inference, with theoretical guarantees, is developed based on stochastic gradient descent estimation. The R package \texttt{ngme2}, which implements the framework, is presented and demonstrated through a wide range of applications including novel non-Gaussian spatial and spatio-temporal models.
}


\section{Introduction}\label{chp:introduction}

The development of statistical models for complex, real-world phenomena often requires balancing flexibility with computational tractability. Latent Gaussian models (LGMs) have emerged as a cornerstone of modern statistical inference, providing a powerful framework for modeling hierarchical dependencies in data. 
This framework encompasses a vast array of models including generalized linear mixed models \citep{fong2010}, survival models \citep{martino2011}, and spatial and spatio-temporal models \citep{INLA2018}. 

Their popularity stems from the combination of modeling flexibility and computational efficiency, particularly when using methods like the integrated nested Laplace approximation (INLA) \citep{rue2017}.
\change{In a Bayesian setting, a} canonical LGM is \change{defined by three layers:}
%
The first layer $\mathbf{Y}|\mathbf{W}, \bs{\theta} \sim \prod_{i\in\mc{I}} \pi(Y_i|W_i, \bs{\theta})$ relates the  data \(\mathbf{Y}\) to the latent Gaussian field \(\mathbf{W}\) via a likelihood function, which often assumes conditional independence given the latent field and parameters $\bs{\theta}$. The second layer $\mathbf{W}|\bs{\theta} \sim \mathcal{N}(\mathbf{m}(\bs{\theta}), \mathbf{Q}^{-1}(\bs{\theta}))$ defines the Gaussian field with mean \(\mathbf{m}(\bs{\theta})\) and precision matrix \(\mathbf{Q}(\bs{\theta})\), and the third $\bs{\theta} \sim \pi(\bs{\theta})$ specifies prior distributions.

Despite their versatility, LGMs have a fundamental limitation in that the Gaussian assumption imposes constraints that may be too restrictive for many real-world phenomena. 
The inherent symmetry of Gaussian processes limits their ability to capture systematic asymmetries or directional biases frequently observed in environmental, economic, and biological data.
They can further be too smooth to capture sudden jumps, regime changes, or local extremes, leading to oversmoothing and missed anomalies \citep{walder2020, dhull2021}.
These structural limitations have motivated the development of non-Gaussian extensions that preserve the computational advantages of LGMs while providing greater distributional flexibility and the ability to capture complex real-world phenomena more accurately.
In particular, the generalized hyperbolic (GH) family of distributions \citep{barndorff1978} has found widespread application across diverse statistical modeling domains. 
This is a 5-parameter distribution that is highly flexible and serves as a superclass for many important distributions, including the Gaussian distribution as a limiting case, as well as Student's $t$, Laplace, hyperbolic, normal-inverse Gaussian (NIG), and variance-gamma distributions.
Autoregressive processes with GH innovating terms are discussed in \cite{ghasami2020} and can model extreme financial market movements not captured by Gaussian models \citep{dhull2021, bibby2003}. 
In spatial statistics, \cite{bolin2014} provided a class of non-Gaussian random fields with Matérn covariance function, constructed as solutions to stochastic partial differential equations (SPDEs) driven by GH noise. 
Similar models have also been applied in the context of geostatistics \citep{wallin2015}, joint modeling of multivariate random fields \citep{multivar2020}, and to model longitudinal data \citep{asar2020}.

Despite the growing interest in non-Gaussian latent fields modeling, accessible statistical software remains limited. The widespread success of LGMs has been largely facilitated by highly efficient computational tools. For instance, the \texttt{mgcv} package \citep{wood2017} has become the standard for fitting generalized additive models (GAMs) by exploiting the equivalence between penalized smoothing splines and Gaussian random effects. Similarly, \texttt{R-INLA} \citep{rue2009, INLA2018} provides a remarkably fast framework for spatial and hierarchical modeling. 
However, these and other popular user-friendly packages like \texttt{brms} \citep{brms}, \texttt{glmmTMB} \citep{glmmTMB}, \texttt{sdmtmb} \citep{sdmtmb}, and \texttt{rstanarm} \citep{rstanarm}, are fundamentally anchored in the assumption of a latent Gaussian field. Consequently, when structural non-Gaussianity is required, users are typically forced to turn to more general and complex software like \texttt{nimble} \citep{nimble} or \texttt{Stan} \citep{stan}. These are often cumbersome for inexperienced users, and computationally expensive for high-dimensional problems \citep{cabral2024}.

\cite{asar2020} presented a framework for non-Gaussian modeling tailored to longitudinal data, accompanied by the R package \texttt{ngme} that implements a maximum likelihood method for model fitting. As this work is mainly concentrated on longitudinal data with specific temporal dependencies, it has limited applicability to data with spatio-temporal, multivariate, and other complex dependency structures.
\cite{cabral2024} proposed a similar framework in the Bayesian setting, implemented in the R package \texttt{ngvb} using variational inference \citep{jordan1999, bishop2006} to approximate posterior distributions of the model parameters. The package is an extension to \texttt{R-INLA} and combines the INLA method for fitting LGMs with the variational approach. However, the package is highly dependent on \texttt{R-INLA}, supports only a limited range of models, and lacks maintenance.

In this work, we introduce a new framework that generalizes and extends that of \cite{asar2020} to allow not only longitudinal modeling but also spatio-temporal, multivariate, and other complex dependency structures through a unified formulation, which also allows for complex non-stationary dependency structures and marginal properties.
We refer to this framework as linear latent non-Gaussian models (LLnGMs). Computational efficiency and scalability are maintained for all models in this family through mixture representations based on the GH family and  sparse matrix methods and full Bayesian inference is achieved through a stochastic gradient Langevin dynamics (SGLD) approach.
\change{Because the Gaussian distribution is a limiting case of the GH family, LLnGMs provide a natural environment for evaluating Gaussian assumptions against more flexible alternatives, ensuring that the additional complexity of a non-Gaussian model is justified by the data.}
We also present the new R package \texttt{ngme2}, which 
has full support for all introduced models.

The structure of this paper is as follows: Section~\ref{chp:framework} describes the LLnGM framework, and Section~\ref{chp:estimation} introduces the computational approaches and the inference method. Section~\ref{chp:software} provides a  brief overview of the \texttt{ngme2} R package. Section~\ref{chp:applications} presents applications of the framework in four different fields. Section~\ref{sec:comparison} compares this framework with other approaches. 
Section~\ref{chp:discussion} discusses broader implications of the framework, including its straightforward modification to frequentist estimation. 
Three appendices contain additional technical details.

\section{The LLnGM Framework}\label{chp:framework}

This section presents the LLnGM framework, which provides a unified approach to modeling complex data structures with latent non-Gaussian features.

\subsection{The model formulation}
The LLnGM framework is specified as follows:
\begin{subequations} \label{eq:framework}
\begin{align}
\text{Data:} \quad & \mathbf{Y} = \mathbf{A}\mathbf{W} + \mathbf{X}\boldsymbol{\beta} + \boldsymbol{\epsilon}^{\mathbf{Y}}, \label{eq:data}\\
\text{Process:} \quad & \mathbf{K}(\boldsymbol{\theta})\mathbf{W} = \boldsymbol{\epsilon}^{\mathbf{W}} ,\label{eq:process}\\
\text{Noise:} \quad & \boldsymbol{\epsilon}^{\star} | \mathbf{V}^{\star} \sim \mathcal{N}(\boldsymbol{\mu}^{\star}({\boldsymbol{\theta}}) \odot \left( \mathbf{V}^{\star} - \mathbf{h}^{\star} \right), \text{diag}(\boldsymbol{\sigma}^{\star}({\boldsymbol{\theta}})^2\odot \mathbf{V}^{\star})), \,\, \star\in\{\mathbf{Y},\mathbf{W}\} \label{eq:noise} \\
\text{Mixture:} \quad & V_j^{\star} \overset{\text{iid}}{\sim} \text{GIG}(p_j^{\star}(\boldsymbol{\theta}), a_j^{\star}(\boldsymbol{\theta}), b_j^{\star}(\boldsymbol{\theta})) \label{eq:distribution} \quad j = 1, \ldots, n_\star, \quad \star \in \{ \mathbf{Y},\mathbf{W} \},\\
\text{Prior:} \quad &
      \bs{\theta} \sim \pi(\bs{\theta}). \label{eq:prior}
\end{align}
\end{subequations}
The data model \eqref{eq:data} specifies how observations $\mathbf{Y}$ relate to the latent process $\mathbf{W}$ through the observation matrix $\mathbf{A}$, incorporates fixed effects $\mathbf{X} \boldsymbol{\beta}$, and accounts for possibly non-Gaussian measurement noise $\boldsymbol{\epsilon}^{\mathbf{Y}}$. 
The process model \eqref{eq:process} characterizes the dependency structure of the latent process $\mathbf{W}$, which is driven by the possibly non-Gaussian noise $\boldsymbol{\epsilon}^{\mathbf{W}}$, via the operator matrix $\mathbf{K}(\boldsymbol{\theta})$. 
As detailed in Section~\ref{subsec:unified-process}, this provides a unified framework that accommodates diverse model structures, ranging from basic random effects models to sophisticated spatio-temporal models.
The hierarchical structure is completed by \eqref{eq:prior}, which assigns prior distributions to the parameter vector $\boldsymbol{\theta}$ (including $\boldsymbol{\beta}$).
Further, \( \odot \) denotes the element-wise product, and \(\mathbf{h}\) is a fixed vector determined by the specific model such that $\mathbb{E}(\mathbf{V})=\mathbf{h}$, which makes the latent process centered. 
\change{One typically has $\mathbf{h} = \mathbf{1}$ for discrete models and $\mathbf{h}$ is determined by the discretization of continuous models (examples in Section~\ref{subsec:unified-process}).}

The noise model \eqref{eq:noise} and \eqref{eq:distribution} define the distribution of the noise $\boldsymbol{\epsilon}^{\mathbf{Y}}$ and $\boldsymbol{\epsilon}^{\mathbf{W}}$ through a normal mean-variance mixture, conditioned on the vectors $\mathbf{V}^{\mathbf{Y}}$ and $\mathbf{V}^{\mathbf{W}}$ which have independent elements following generalized inverse Gaussian (GIG) distributions. 
The \(\text{GIG}(p,a,b)\) distribution has density
\begin{equation*}
  f(x;p,a,b) = \frac{(a/b)^{p/2}}{2K_p(\sqrt{ab})} x^{p-1} \exp\left\{-\frac{1}{2}(ax + b/x)\right\}, \quad x > 0,
  \label{eq:gig}
\end{equation*}
where \(K_p\) is a modified Bessel function of the second kind of order \(p \in \mathbb{R}\), and the parameters \(a\) and \(b\) satisfy certain constraints depending on \(p\) \citep{jorgensen1982}. When \(V\sim\text{GIG}(p,a,b)\) is used in a mean-variance mixture $B = \delta + \mu V + \sigma \sqrt{V} Z$, where $\delta$, $\mu$, and $\sigma>0$ are real numbers, and \(Z \sim N(0,1)\) is a standard Gaussian random variable independent of $V$, $B$ has a GH distribution.
As previously mentioned, the GH family contains several important special cases, including Student's t, NIG, Generalized asymmetric Laplace (GAL), and Cauchy distributions. 
Table~\ref{tab:GH} summarizes these special cases and corresponding mixing distributions.



%
In \eqref{eq:noise}, the noise can be represented as
$\boldsymbol{\epsilon} = -\boldsymbol{\mu} \odot \mathbf{h} + \boldsymbol{\mu} \odot \mathbf{V} + \boldsymbol{\sigma} \odot \sqrt{\mathbf{V}} \odot \mathbf{Z}$, where $\mathbf{Z} \sim \mathcal{N}(\mathbf{0}, \mathbf{I}_n)$ and  $\mathbf{Z} \perp \mathbf{V}$.
By setting the $V_j$'s to be independent and GIG-distributed, it follows that the $\epsilon_j$'s are independent and follow GH distributions.

Note that the mean of the GIG distribution is set equal to $\mathbf{h}$, i.e., $\mathbb{E}(\mathbf{V})=\mathbf{h}$, so that the latent process is centered. Moreover, if $\mathbf{V}=\mathbf{h}$, the latent model reduces to a Gaussian model.
This mixture representation is powerful because the conditional distribution \(V_j | \epsilon_j\) remains in the GIG family, enabling efficient MCMC sampling, which is a crucial property for the efficient model inference in Section~\ref{chp:estimation}. 


\begin{table}[t]
\centering
\caption{\label{tab:GH}Special Cases of the GH Distribution and Their Corresponding Mixing Distributions}
\begin{tabular}[t]{lll}
\hline
Special case of GH distribution & Distribution of $V$ & GIG form of $V$\\
\hline
Student's t & Inverse-Gamma($\nu/2, \nu/2$) & GIG($-\nu/2, 0, \nu$)\\
Normal-inverse Gaussian  & Inverse-Gaussian($\alpha, \beta$) & GIG($-1/2, \alpha, \beta$)\\
Generalized asymmetric Laplace  & Gamma($\alpha, \beta$) & GIG($\alpha, 2\beta, 0$)\\
Cauchy & Inverse-Gamma($1/2, 1/2$) & GIG($-1/2, 0, 1$)\\
\hline
\end{tabular}
\end{table}

LLnGMs notably accommodate non-Gaussian characteristics in both the measurement noise and the latent field through independent mixing variable mechanisms. This allows for modeling of scenarios where both the underlying process and the measurement mechanism have heavy-tailed, skewed, or heteroscedastic properties.
Another key feature is the ability to naturally accommodate non-stationary behavior. 
The parameters $\boldsymbol{\mu}$, $\boldsymbol{\sigma}$, and the GIG parameters can be specified as functions of covariates, enabling location-dependent distributional characteristics. 
This non-stationarity enables the non-Gaussian characteristics to adapt to local conditions, capturing heterogeneous behavior across the domain.
One particularly interesting case enabled by the framework is models with non-stationary skewness, where $\boldsymbol{\mu}$ is a function of spatial or temporal covariates. This opens up new possibilities for modeling in several fields. 
For example, in climate modeling, precipitation skewness may vary with geographic and seasonal factors; in financial modeling, return asymmetry can depend on market conditions; and in environmental monitoring, pollutant concentration asymmetry may change with meteorological conditions.
The application of non-stationary skewness models in these domains appears to be largely unexplored, suggesting that this framework may offer novel contributions in this area. We explore this in one of the applications in Section~\ref{chp:applications}.

\subsection{Flexible but unified process models} \label{subsec:unified-process}

The versatility of the LLnGM framework becomes apparent when considering how numerous well-established statistical models can be expressed through different specifications of the operator matrix $\mathbf{K}(\boldsymbol{\theta})$. 
The framework encompasses a wide range of useful models, including but not limited to temporal models such as autoregressive models (AR(p)), random walks (RW), and the Ornstein-Uhlenbeck (OU) process. It also includes spatial models with Matérn covariance facilitated by the SPDE approach \citep{lindgren2011, wallin2015}, among others.
The exact parameterization of the $\mathbf{K}$ matrix for these models is provided in Appendix~\ref{app:operators}.
See also \citet[][Section 4]{cabral2023}, where the authors adapt the similar parameterization of the latent models, and discuss the construction of the operator matrix $\mathbf{K}(\boldsymbol{\theta})$ for different models. 

As an example, we present spatial models with Matérn covariances using the SPDE approach, as they are used in later sections.
The Matérn covariance family is 
\begin{equation}
  \text{Cov}(\mathbf{s}_1, \mathbf{s}_2) = \frac{2^{1-\nu} \sigma^2}{\Gamma(\nu)} \left(\kappa ||\mathbf{s}_1 - \mathbf{s}_2||\right)^{\nu} K_{\nu}(\kappa ||\mathbf{s}_1 - \mathbf{s}_2||),
\label{eq:matern-covariance}
\end{equation}
where $K_{\nu}$ is the modified Bessel function of the second kind, $\nu>0$ is the smoothness parameter, $\sigma^2>0$ is the marginal variance, and $\kappa>0$ controls the spatial scale (range) of dependence.
In the SPDE approach, a Gaussian Whittle--Mat\'ern field $u(\cdot)$ on a spatial domain $\mathcal{D}$ is represented as a solution to SPDE
\begin{equation}\label{eq:spde-general}
  (\kappa^2 - \Delta)^{\alpha/2} \, u(\cdot) 
  \;=\;
  \mathcal{W}(\cdot), \quad \text{on } \mathcal{D},
\end{equation}
where $\mathcal{W}(\cdot)$ is a Gaussian white noise, $\Delta$ is the Laplacian, and $\alpha = \nu + d/2$ \citep{whittle1963}.
By discretizing \eqref{eq:spde-general} on a triangulation of the domain, one obtains a model of the form \eqref{eq:process} where $\mathbf{K}$ is a sparse matrix if $\alpha$ is an even integer \citep{lindgren2011} and $\mathbf{h}$ is determined by  the triangulation, see Appendix~\ref{app:operators} for details. 
Replacing $\mathcal{W}(\cdot)$ with non-Gaussian white noise \(\dot{M}(\cdot)\) results in non-Gaussian fields with the same covariance structure and computational benefits \citep{bolin2014}. 
Both the Gaussian and non-Gaussian versions are compatible with our framework.
\change{In this example, $\mathbf{h}$ is the vector indexing the triangulation, and see the Appendix~\ref{app:operators} for the precise definition.}


By viewing the model construction through the operator matrix, one can also construct more complex models through combinations of two fundamental building blocks applied to simpler operator matrices: the \textit{multivariate} and  \textit{tensor product} operations.
The multivariate operation facilitates spatial and temporal multivariate models with cross-correlation structure.
\change{Given two LLnGMs of the form  \eqref{eq:framework} with subscript 1 and 2 to indicate the component of the associated model, we can combine them to obtain new models LLnGM of the form  \eqref{eq:framework}.}
We can introduce a bivariate process with dependencies through a dependence matrix \(\mathbf{D}\), as 
\begin{equation}
\mathbf{K} \mathbf{W} = 
(\mathbf{D}\otimes \mathbf{I}_d) \mathbf{K}_{\text{block}} \begin{pmatrix} \mathbf{W}_1 \\ \mathbf{W}_2 \end{pmatrix} = \begin{pmatrix} \boldsymbol{\epsilon}^{\mathbf{W}}_1 \\ \boldsymbol{\epsilon}^{\mathbf{W}}_2 \end{pmatrix},
\label{eq:multivariate-construction}
\end{equation}
where 
$\mathbf{K}_{\mathrm{block}} = \mathrm{blockdiag}(\mathbf{K}_1, \mathbf{K}_2)$
and $\boldsymbol{\epsilon}^{\mathbf{W}}_1$ and $\boldsymbol{\epsilon}^{\mathbf{W}}_2$ are two independent noises.
For example, \cite{multivar2020} introduce the multivariate Type G Matérn SPDE model, which parameterizes $\mathbf{D}$ through a rotation parameter $\zeta \in [0, 2\pi]$ and ``correlation'' parameter $\rho \in \mathbb{R}$ as:
\begin{equation}
\mathbf{D}(\zeta, \rho) = \begin{bmatrix}
\cos(\zeta) + \rho \sin(\zeta) & - \sin(\zeta) \sqrt{1+\rho^2} \\
\sin(\zeta) - \rho \cos(\zeta) & \cos(\zeta) \sqrt{1+\rho^2} 
\end{bmatrix}.
\label{eq:dependence-matrix}
\end{equation}
In the non-Gaussian case, we let $\mathbf{V} = \big(\mathbf{V}_1^\top, \mathbf{V}_2^\top\big)^\top$ be the vector of the mixing variables. 
Introducing $\mathbf{h} = \big(\mathbf{h}_1^\top, \mathbf{h}_2^\top\big)^\top$ and $\boldsymbol{\mu} = \big(\boldsymbol{\mu}_1^\top, \boldsymbol{\mu}_2^\top\big)^\top$, we then define the non-Gaussian model  
$\mathbf{K} \mathbf{W} \mid \mathbf{V} 
\sim \mathcal{N}\Big( \boldsymbol{\mu} \odot (\mathbf{V} - \mathbf{h}), \; \mathrm{diag}(\mathbf{V}) \Big)$ which thus fits in the LLnGM framework \eqref{eq:framework}.

The construction extends naturally to higher-dimensional cases. Further,
by setting $\mathbf{D} = \mathbf{I}_2$ and defining an observation matrix \(\mathbf{A} = [\mathbf{A}_1, \mathbf{A}_2]\) based on the observation matrices $\mathbf{A}_1$ and $\mathbf{A}_2$ for $\mathbf{W}_1$ and $\mathbf{W}_2$, we also recover the additive model of two independent processes:
\(
\mathbf{A}\mathbf{W} = \mathbf{A}_1\mathbf{W}_1 + \mathbf{A}_2\mathbf{W}_2,
\)
which allows for several independent non-Gaussian processes to be used in the latent model.

Another way to construct new processes is by using the tensor product operation. This allows for separable spatio-temporal models to be cast in this framework. 
For example, the spatio-temporal model in \cite{camel2013} is defined as 
\begin{subequations}\label{eq:spatio-temporal-model}
\begin{align}
\mathbf{y}_t &= \mathbf{z}_t \boldsymbol{\beta} + \boldsymbol{\xi}_t + \boldsymbol{\epsilon}_t, \\
\boldsymbol{\xi}_t &= \rho \, \boldsymbol{\xi}_{t-1} + \boldsymbol{\omega}_t, 
\end{align}
\end{subequations}
where
$\mathbf{y}_t = \big(y(\mathbf{s}_1, t), \ldots, y(\mathbf{s}_d, t)\big)^{\top}$,  $\mathbf{s}_i \in \mathbb{R}^2$, $t \in \mathbb{R}$,
\(\mathbf{z}_t \) is the covariate information at time \(t\),
\(\boldsymbol{\xi}_t = \big(\xi(\mathbf{s}_1, t), \ldots, \xi(\mathbf{s}_d, t)\big)^{\top}\) is the latent process,
$\boldsymbol{\epsilon}_t \sim \mathcal{N}\big( \mathbf{0}, \sigma_\epsilon^2 \mathbf{I}_d \big)$ is measurement noise,
and $\boldsymbol{\omega}_t$ are independent copies of a discretized SPDE field, satisfying $\boldsymbol{K}_{\text{Mat\'ern--SPDE}}(\boldsymbol{\theta})\boldsymbol{\omega}_t = \boldsymbol{\epsilon}_t^W$ with $\boldsymbol{\epsilon}_t^W \sim \mathcal{N}(\mathbf{0},\text{diag}(\boldsymbol{h}))$. 
%
This model fits into the LLnGM framework \eqref{eq:framework} by setting
$\mathbf{Y}=(\mathbf{y}_1^\top,\ldots,\mathbf{y}_T^\top)^\top$,
$\mathbf{W}=(\boldsymbol{\xi}_1^\top,\ldots,\boldsymbol{\xi}_T^\top)^\top$,
$\boldsymbol{\epsilon}^{\mathbf{Y}}=(\boldsymbol{\epsilon}_1^\top,\ldots,\boldsymbol{\epsilon}_T^\top)^\top$, $\mathbf{X} = (\boldsymbol{z}_1^\top,\ldots,\boldsymbol{z}_T^\top)^\top$, 
$\boldsymbol{\epsilon}^{\mathbf{W}}=(\boldsymbol{\omega}_1^\top,\ldots,\boldsymbol{\omega}_T^\top)^\top$ and $
\mathbf{A} = \mathbf{I}_T \otimes \mathbf{I}_d .
$
Moreover, the spatio-temporal precision operator can be written as the tensor product of the AR(1) and Mat\'ern--SPDE operators,
$
\mathbf{K}(\boldsymbol{\theta})
= \mathbf{K}_{\text{AR1}} \otimes \mathbf{K}_{\text{Mat\'ern--SPDE}} .$
This allows for non-Gaussian spatio-temporal field modeling by replacing $\boldsymbol{\epsilon}^Y$ and $\boldsymbol{\epsilon}^W$ with GH noise.
\change{Details are provided in Appendix~\ref{app:operators}.}

The tensor product formulation also makes it straightforward to construct new models by replacing the base operators with alternatives. 
As a special case, a replicate (random-effects) structure can be represented by
\(
\mathbf{K} = \mathbf{I}_r \otimes \mathbf{K}_0,
\)
where $r$ denotes the number of replicates and $\mathbf{K}_0$ is the base operator shared across replicates.


These multivariate and tensor operations can also be combined to construct more complex models.
The model in \eqref{eq:spatio-temporal-model} can be extended to bivariate cross-correlated spatial fields
by applying the multivariate operation to the spatial operator while retaining the temporal AR(1) structure,
\begin{equation}
\mathbf{K}(\boldsymbol{\theta})
=
\mathbf{K}_{\text{AR1}}(\rho)
\otimes
\Big(
(\mathbf{D} \otimes \mathbf{I}_d)\;\mathrm{blockdiag}(\mathbf{K}_{\text{1}}(\boldsymbol{\theta}_1),\mathbf{K}_{\text{2}}(\boldsymbol{\theta}_2))
\Big),
\label{eq:tensor-multivariate-construction}
\end{equation}
where $\mathbf{D} \in\mathbb{R}^{2\times 2}$ is a dependence (mixing) matrix that couples the two spatial components.
To describe the resulting model, extend the observation vector to length $2d$ by stacking the two processes.
With $\mathbf{y}_t^{(u)},\mathbf{y}_t^{(v)}\in\mathbb{R}^d$ (and similarly for $\mathbf{z}_t$, $\boldsymbol{\xi}_t$, $\boldsymbol{\omega}_t$, and $\boldsymbol{\epsilon}_t$), define
\begin{equation*}
\mathbf y_t = \begin{pmatrix} \mathbf y_t^{(u)} \\ \mathbf y_t^{(v)} \end{pmatrix},\quad
\mathbf z_t = \begin{pmatrix} \mathbf z_t^{(u)} \\ \mathbf z_t^{(v)} \end{pmatrix},\quad
\boldsymbol{\xi}_t = \begin{pmatrix} \boldsymbol{\xi}_t^{(u)} \\ \boldsymbol{\xi}_t^{(v)} \end{pmatrix},\quad
\boldsymbol{\omega}_t = \begin{pmatrix} \boldsymbol{\omega}_t^{(u)} \\ \boldsymbol{\omega}_t^{(v)} \end{pmatrix},\quad
\boldsymbol{\epsilon}_t = \begin{pmatrix} \boldsymbol{\epsilon}_t^{(u)} \\ \boldsymbol{\epsilon}_t^{(v)} \end{pmatrix}.
\end{equation*}

We can then define the Gaussian spatio-temporal model through \eqref{eq:spatio-temporal-model}, where now $\boldsymbol{\epsilon}_t \sim \mathcal{N}\Big(
 \mathbf{0},
 \mathrm{blockdiag}(\sigma^2_{\epsilon^{(u)}} \mathbf{I}_d,\; \sigma^2_{\epsilon^{(v)}} \mathbf{I}_d)
\Big)$ and $\boldsymbol{\omega}_t$ are independent multivariate fields defined through \eqref{eq:multivariate-construction}.
To extend the model to non-Gaussian settings, we simply use the non-Gaussian extension of the multivariate model in \eqref{eq:multivariate-construction} for each $\mathbf{\omega}_t$. Having all mixing variables independent across locations $j$ and time $t$ and stacking the variables over time, this model can again be written in the LLnGM framework \eqref{eq:framework}.

We can also construct another bivariate spatio-temporal model by switching the order of the tensor product and the multivariate operation.
That is,
\begin{equation}
\mathbf{K} = \text{blockdiag}(\mathbf{K}^{(u)}_{\text{AR1}} \otimes \mathbf{K}^{(u)}_{\text{Matérn}}, \mathbf{K}^{(v)}_{\text{AR1}} \otimes \mathbf{K}^{(v)}_{\text{Matérn}}) \cdot \mathbf{D}.
\label{eq:multivariate-tensor-construction}
\end{equation}
In the Gaussian case, this results in the model \cite{lenzi2020} used to model wind vectors. The difference between this and the former construction is that the former is an AR(1) model in time driven by correlated bivariate fields, whereas this is a bivariate coregionalization model based on two independent AR(1) models. 

While separable space--time constructions (e.g., $\mathbf{K}_t \otimes \mathbf{K}_s$) act as effective baselines, they generally fail to represent genuine space--time interactions, such as transport or directionality, where temporal evolution depends on spatial gradients \citep{lindgren2023diffusion, advection_diffusion}. The LLnGM framework can also incorporate non-separable models derived from evolution SPDEs, such as the advection--diffusion equation. 
Discretizing such models using implicit Euler time-stepping and FEM in space yields a sparse block-lower-bidiagonal operator matrix 
$\mathbf{K}_{\text{adv-diff}}$,
where the blocks encodes the temporal dependence and the spatial operator (including diffusion and advection). 
The details of the construction are discussed in Appendix~\ref{app:operators}.
Further, just as we extended separable models to the bivariate case in \eqref{eq:multivariate-tensor-construction}, we can construct bivariate \emph{non-separable} models. By replacing the separable blocks in \eqref{eq:multivariate-tensor-construction} with non-separable advection-diffusion operators, we obtain
$\mathbf{K}_{\text{bi-nonsep}} = \text{blockdiag}(\mathbf{K}_{\text{adv-diff}}^{(u)}, \mathbf{K}_{\text{adv-diff}}^{(v)}) \cdot \mathbf{D}$.
This construction retains the sparsity required for efficient inference while allowing for modeling of complex, coupled, and physically-motivated non-separable processes.

Thus, the LLnGM framework unifies models ranging from simple temporal time series models to complex multivariate non-separable spatio-temporal processes.

\section{Inference and prediction} \label{chp:estimation}

\subsection{Parameter Estimation}
\label{sec:map}

\change{Maximum a posteriori (MAP)} estimation of LLnGMs involves maximizing the \change{log-posterior} function \change{$\ell(\boldsymbol{\theta}; \mathbf{Y}) + \log \pi(\boldsymbol{\theta})$}, \change{or equivalently minimizing the negative log-posterior (energy) function $E(\boldsymbol{\theta})$}.
Due to the non-Gaussian hierarchical model structure, the direct computation of the likelihood component is intractable.
However, we can approximate the gradient and Hessian of the objective function, which facilitates the use of both gradient-based first-order optimization methods and Newton-type second-order optimization methods.
A variety of first-order optimization methods can be utilized for maximizing the posterior, such as stochastic gradient descent (SGD) \citep{robbins1951}, which was used by \cite{asar2020} for a subclass of LLnGMs, ADAM \citep{adam}, and ADAMW \citep{adamw}, among others. For second-order methods, Newton's method and natural gradient descent, which employ the Fisher information matrix as a preconditioner \citep{amari1998}, can be applied.

Since the marginal log-likelihood \(\ell(\boldsymbol{\theta}; \mathbf{Y})\) is intractable for LLnGM models, we employ Fisher's identity \citep{douc2013nonlinear} to express the gradient in terms of the augmented log-likelihood, 
    $\nabla \ell(\boldsymbol{\theta}; \mathbf{Y}) = \mathbb{E}_{\mathbf{V}, \mathbf{W}} \left[\nabla \ell(\boldsymbol{\theta}; \mathbf{Y}, \mathbf{V}, \mathbf{W}) \mid \mathbf{Y} \right]$,
where \(\ell(\boldsymbol{\theta}; \mathbf{Y}, \mathbf{V}, \mathbf{W})\) is the augmented log-likelihood that can be computed explicitly since it involves only the conditional distributions \(\pi(\mathbf{W}|\mathbf{V})\), \(\pi(\mathbf{Y}|\mathbf{V}, \mathbf{W})\), and \(\pi(\mathbf{V})\).
The expectation in the gradient expression can be approximated using Monte Carlo integration. Using samples \(\{(\mathbf{V}^{(j)}, \mathbf{W}^{(j)})\}_{j=1}^k\) from the posterior distribution, the \change{stochastic} gradient estimator \change{for the objective function $E(\boldsymbol{\theta})$} is
\begin{equation}
    \change{\nabla E(\boldsymbol{\theta}) \approx \mathbf{g}(\boldsymbol{\theta}) = - \frac{1}{k} \sum_{j=1}^k \nabla \ell(\boldsymbol{\theta}; \mathbf{Y}, \mathbf{V}^{(j)}, \mathbf{W}^{(j)}) - \nabla \log \pi(\boldsymbol{\theta}),}
    \label{eq:mc-gradient}
\end{equation}
which is unbiased by construction, and has a variance which decreases as \(O(1/k)\). The exact form of the gradient estimator is provided in Appendix~\ref{app:inference-details}.

Given the gradient estimator, we can then directly use any first-order method. For example, the SGD method updates the parameter vector iteratively according to
$\boldsymbol{\theta}^{(t+1)} = \boldsymbol{\theta}^{(t)} - \gamma_t \mathbf{g}(\boldsymbol{\theta}^{(t)})$,
where \(\gamma_t > 0\) is the step size at iteration \(t\). \change{To ensure that constraints on the parameters are satisfied during optimization, we typically perform the optimization on a transformed scale (e.g., using a logarithmic transform for positive parameters) so that the parameters that are optimized are unconstrained.} 

The LLnGM structure involves unobserved mixing variables \(\mathbf{V} = (\mathbf{V}^W, \mathbf{V}^Y)\) and latent fields \(\mathbf{W}\). 
Crucially, each conditional distribution belongs to a known parametric family, enabling efficient Gibbs sampling. Specifically, \(\mathbf{W} | \mathbf{V}, \mathbf{Y}\) follows a multivariate Gaussian distribution, and both \(V^W_i | \mathbf{W}, \mathbf{Y}\) and \(V^Y_i | \mathbf{W}, \mathbf{Y}\) follow GIG distributions. The specific forms of these conditional distributions are provided in Appendix~\ref{app:inference-details}. Importantly, the precision matrix of \(\mathbf{W} | \mathbf{V}, \mathbf{Y}\) is sparse, which facilitates efficient sampling.
Algorithm~\ref{algo:gibbs} presents the Gibbs sampling algorithm used in the gradient estimator.

\begin{center}
\begin{minipage}{\linewidth}
\begin{algorithm}[H]
\caption{Gibbs Sampling for LLnGM Models}\label{algo:gibbs}
\begin{algorithmic}[1]
\Require Data $\mathbf{Y}$, parameters $\boldsymbol{\theta}$, number of iterations $T$
\State Initialize $\mathbf{V}^{(0)} = (\mathbf{V}^{W,(0)}, \mathbf{V}^{Y,(0)})$
\For{$t = 1, 2, \ldots, T$}
    \State Sample $\mathbf{W}^{(t)} \sim p(\mathbf{W} | \mathbf{V}^{(t-1)}, \mathbf{Y})$ \Comment{Multivariate Gaussian}
    \State Sample $\mathbf{V}^{W,(t)} \sim p(\mathbf{V}^W | \mathbf{W}^{(t)}, \mathbf{Y})$ \Comment{GIG distributions}
    \State Sample $\mathbf{V}^{Y,(t)} \sim p(\mathbf{V}^Y | \mathbf{W}^{(t)}, \mathbf{Y})$ \Comment{GIG distributions}
    \State Set $\mathbf{V}^{(t)} = (\mathbf{V}^{W,(t)}, \mathbf{V}^{Y,(t)})$
\EndFor
\State \Return $\{(\mathbf{V}^{(t)}, \mathbf{W}^{(t)})\}_{t=1}^T$ \Comment{Samples from posterior $p(\mathbf{V}, \mathbf{W} | \mathbf{Y})$}
\end{algorithmic}
\end{algorithm}
\end{minipage}
\end{center}

While the Monte Carlo estimator \eqref{eq:mc-gradient} is unbiased, its variance can be substantial, leading to noisy gradient estimates and slow convergence of the optimization algorithm. We can reduce this variance using a Rao-Blackwellization technique \citep{gelfand1990, liu1994}, which exploits the analytical tractability of certain conditional expectations.
In our context, we can apply Fisher's identity a second time to obtain
    $\nabla \ell(\boldsymbol{\theta}; \mathbf{Y}, \mathbf{V}) 
    = \mathbb{E}_{\mathbf{W}} \left[\nabla \ell(\boldsymbol{\theta}; \mathbf{Y}, \mathbf{V}, \mathbf{W}) \mid \mathbf{Y}, \mathbf{V} \right]$.
Therefore, we can construct the Rao-Blackwellized estimator
    \change{$\mathbf{g}_{\text{RB}}(\boldsymbol{\theta}) = - \frac{1}{k} \sum_{j=1}^k \nabla \ell(\boldsymbol{\theta}; \mathbf{Y}, \mathbf{V}^{(j)}) - \nabla \log \pi(\boldsymbol{\theta})$},
using only the mixing variable samples \(\{\mathbf{V}^{(j)}\}_{j=1}^k\).
The theoretical foundation for this approach relies on the ergodicity of the Gibbs sampler. Under the mild conditions established in the recent theoretical work \citet{ergodicity}, the Gibbs chain \(\{\mathbf{V}^{(j)}\}\) is geometrically ergodic, which combined with appropriate integrability conditions, ensures that
$\mathbf{g}_{\text{RB}}(\boldsymbol{\theta}) \rightarrow \nabla E(\boldsymbol{\theta})$ a.s.~and in $L^1$ by Birkhoff's ergodic theorem.
This convergence result establishes that the Rao-Blackwellized estimator is both unbiased and consistent. The geometric ergodicity further provides exponential convergence rates and enables the use of small batch sizes in the optimization algorithm, making the method computationally efficient.
\change{In practice, we observe that preconditioned SGD using Rao-Blackwellization with 5 Gibbs samples per iteration provides  fast and stable optimization in most situations. An example of traceplots for the optimizer is given in Appendix~\ref{app:comparison}.}

\subsection{Convergence Diagnostics}
\label{sec:convergence-diagnostics}

\change{To assess the convergence of the optimization process, we initialize multiple independent chains from different starting locations in parallel. Adopting a checkpoint-based monitoring strategy, we evaluate stability at regular intervals (e.g., every 10 iterations) using three complementary diagnostics.}

First, 
we evaluate between-chain agreement using the
Gelman--Rubin statistic $\widehat{R}$ \citep{gelman1992}, which compares variability across chains to variability within chains over recent checkpoints. Values of $\widehat{R}$ close to one indicate that different initializations lead to similar parameter trajectories.
Second, we assess within-chain stabilization by examining recent checkpoint summaries for each chain.
Convergence is supported when parameters fluctuate only mildly around a stable level and exhibit
negligible systematic drift. Practically, this can be quantified via the relative variability of
checkpoint means and the magnitude of a fitted linear trend over a sliding window \citep{asar2020}.
Third, we complement parameter-based criteria with a gradient-based diagnostic.
Let $\widehat{\mathbf{g}}_t$ denote the stochastic gradient estimate at iteration $t$; following
\citet{ljung1992,pesme2020convergence}, we consider the accumulated inner products
${S_T=\sum_{t=1}^{T-1}\widehat{\mathbf{g}}_t^\top \widehat{\mathbf{g}}_{t+1}}$.
Persistent directional movement typically yields a large accumulation, whereas near-equilibrium
stochastic-approximation dynamics lead to small values of $S_T$.

These diagnostics provide complementary evidence of convergence by combining cross-chain
agreement, within-chain stabilization, and gradient-level behavior.

\subsection{Full Bayesian Inference}

While MAP estimation provides a computationally efficient point estimate of the parameters, it does not capture the full posterior uncertainty. To address this, we perform full Bayesian inference using stochastic gradient Langevin dynamics (SGLD) \citep{welling2011}. SGLD seamlessly integrates with our existing gradient-based optimization framework by adding a scaled Gaussian noise term to the standard SGD update rule. This allows us to transition from optimization to posterior sampling using the same estimated gradients derived in \eqref{eq:mc-gradient}.

The inference procedure consists of two phases. In the first phase, we run the optimization algorithm described in Section~\ref{sec:map} until the parameters stabilize around the MAP estimate. This phase effectively serves as the ``burn-in'' period for the MCMC chain. Once convergence is detected via the diagnostics in Section~\ref{sec:convergence-diagnostics}, we switch to the SGLD phase. The parameter update rule at iteration $t$ becomes:
\begin{equation}
    \boldsymbol{\theta}^{(t+1)} = \boldsymbol{\theta}^{(t)} - \gamma_t \mathbf{g}(\boldsymbol{\theta}^{(t)}) + \boldsymbol{\phi}^{(t)}, \quad \boldsymbol{\phi}^{(t)} \sim \mathcal{N}(\mathbf{0}, 2\gamma_t \mathbf{I}),
    \label{eq:sgld-update}
\end{equation}
where $\mathbf{g}(\boldsymbol{\theta}^{(t)})$ is the stochastic gradient of the negative log-posterior (energy function), and $\gamma_t$ is the step size. The injected noise $\boldsymbol{\phi}^{(t)}$ ensures that the trajectory of $\boldsymbol{\theta}$ converges to the posterior distribution $\pi(\boldsymbol{\theta} | \mathbf{Y}) \propto \exp(-E(\boldsymbol{\theta}))$ as $t \to \infty$ and $\gamma_t \to 0$ \citep{welling2011, ma2015}.

This approach is particularly advantageous for LLnGMs because it leverages the efficient, sparse-matrix-based gradient computation and Rao-Blackwellization developed for the MAP estimation, avoiding the need for computationally expensive Metropolis-Hastings acceptance steps.

\subsection{Model Prediction}
\label{sec:model-prediction}

Prediction is performed by propagating posterior uncertainty from the fitted LLnGM to the
quantities of interest. The default target is the latent linear predictor, which supports spatial and
temporal prediction with coherent uncertainty quantification.

Posterior uncertainty is represented through samples of the latent variables from a Gibbs
sampler targeting the augmented posterior distribution
$p_{\hat\theta}(\mathbf{Z}\mid \mathbf{Y})$, where
$\mathbf{Z}=(\mathbf{W},\mathbf{V}^Y,\mathbf{V}^W)$. Concretely, we run the Gibbs sampler described in
Algorithm~\ref{algo:gibbs} and collect draws
$\mathbf{Z}^{(j)} \sim p_{\hat\theta}(\mathbf{Z}\mid \mathbf{Y})$, $j=1,\ldots,k$.

Let $\boldsymbol{\eta} := \mathbf{A}\mathbf{W} + \mathbf{X}\boldsymbol{\beta}$ denote the linear predictor at
observed indices, and define the predictor at new indices (or locations) as
\(
\boldsymbol{\eta}_\star := \mathbf{A}_\star \mathbf{W} + \mathbf{X}_\star \boldsymbol{\beta},
\)
where $\mathbf{A}_\star$ and $\mathbf{X}_\star$ are the corresponding observation operator and covariates for the
prediction set. Given Gibbs draws
$\mathbf{Z}^{(j)}$,
predictive samples of the linear predictor are obtained directly via
\(
\boldsymbol{\eta}_\star^{(j)} = \mathbf{A}_\star \mathbf{W}^{(j)} + \mathbf{X}_\star \boldsymbol{\beta},
\quad j=1,\ldots,k.
\)
Posterior summaries (e.g., point predictions, credible intervals, exceedance probabilities) are then
computed from $\{\boldsymbol{\eta}_\star^{(j)}\}_{j=1}^k$.


\section{The \texttt{ngme2} Package: Implementation of the Framework}
\label{chp:software}

The R package \texttt{ngme2} \citep{ngme2} provides a comprehensive implementation of the LLnGM framework introduced in Section~\ref{chp:framework}. 
It translates the formulation in \eqref{eq:framework} into a set of computational tools and implements the
stochastic gradient based estimation methodology described in Section~\ref{chp:estimation}. 
The package combines an R interface for model specification with a high performance C++ backend. 
It supports parallel chain estimation using shared-memory multithreading via OpenMP \citep{dagum1998openmp}, and computationally intensive operations are accelerated using optimized linear algebra routines, including Eigen
\citep{eigenweb} for matrix computations and sparse solvers such as CHOLMOD \citep{chen2008} and
PARDISO \citep{pardiso} for efficient sparse matrix factorization.

Noise components specify the distribution of the driving noise in the latent process and the
measurement noise. In \texttt{ngme2}, common choices are provided through constructors such as
\texttt{noise\_normal(sigma)} for Gaussian noise, \texttt{noise\_nig(mu, sigma, nu)} for
NIG noise, and \texttt{noise\_gal(mu, sigma, nu)} for GAL noise.

Latent processes in the LLnGM framework are specified using the \texttt{f()} function, following a
syntax similar to that of \texttt{R-INLA}. The general form is
\texttt{f(map, model = <type>(mesh = mesh, ...), noise = noise, \ldots)}, where \texttt{<type>} denotes
a model class (e.g., \texttt{ar1}, \texttt{matern}). 
Internally, \texttt{f()} maps the model specification to the corresponding operator matrices \(\mathbf{K}(\boldsymbol{\theta})\), and manages sparse matrix construction, parameter initialization, and model specific constraints. In particular, it supports tensor product and multivariate constructions as discussed in Section~\ref{subsec:unified-process}.

Standard \textsf{R} formula syntax is used to define the linear predictor, including fixed effects and
multiple latent components through repeated \texttt{f()} terms. Given a formula and a data frame containing the data, the
\texttt{ngme()} function assembles the individual components into a coherent statistical model and
performs likelihood based estimation under user-specified control options.
The fitted object can be used to perform prediction discussed in Section~\ref{sec:model-prediction} using a \texttt{predict} function.

For a complete introduction to the functionality and options of the package, see the package homepage \url{https://davidbolin.github.io/ngme2/}.

\section{Applications} \label{chp:applications}

This section presents four applications using LLnGMs of increasing complexity; a time series model, a longitudinal model, a spatial model, and a multivariate spatio-temporal model. 
For model comparison in each application, we use  
mean absolute errors (MAE), mean squared errors (MSE), the (negative) continuous ranked probability score (CRPS) \citep{crps}, and the (negative) scaled continuous ranked probability score (sCRPS)  \citep{bolin2023}.
MAE and MSE are standard measures of model accuracy, while CRPS and sCRPS are probabilistic scores that quantify the quality of the predictive distribution. 
All analyses were performed using \texttt{ngme2} on an M1 Mac Max 2021 with 32GB RAM, and all estimations were run until convergence by the criteria in Section~\ref{sec:convergence-diagnostics}. 

\change{
For NIG models, we assign a penalizing complexity prior to the 
mixing parameter $\nu$, specifically an Inverse-Exponential prior $1/\nu \sim \mathrm{Exp}(\lambda)$ \citep{cabral2023}. 
This penalizes deviations away from a Gaussian baseline model which is recovered as $\nu \to \infty$. To ensure a 
weakly informative setting, we calibrate the rate parameter as 
$\lambda = \ln(2) / \mathrm{median}(\mathbf{h})$. This choice implies that the prior median of 
$\nu$ matches the reciprocal of the representative scale of $\mathbf{h}$, 
resulting in a prior-median coefficient of variation $\mathrm{CV}(V) \approx 1$. 
This allows for substantial departures from Gaussianity 
while maintaining model stability.
An alternative calibration which would need to be done separately for each specific model is to choose $\lambda$ by considering the ratio of the probability of large marginal events in the 
non-Gaussian case relative to its Gaussian counterpart \citep[see][Appendix~E]{cabral2023}. 
For all other parameters, the \texttt{ngme2} analysis employs default 
weakly informative priors $\mathcal{N}(0, 10)$ on the parameters transformed to the unconstrained space.}
Code to replicate all results is available at \url{https://github.com/MuggleJinx/LLNGM-code}.

\subsection{The Grasshopper population data}

We analyze the grasshopper abundance data from Montana, USA, covering the period from 1948 to 1990 (with the years 1949, 1950, 1977, and 1982 missing), which is a classic example of wildlife time-series data with ecological fluctuations and population dynamics. 
The grasshopper dataset was originally reported by \citet{kemp1993}.
Figure~\ref{fig:grasshopper-data} illustrates the population trends over the observed period. 

\begin{figure}[t]
  \centering
  \subfloat[\label{fig:grasshopper-data}]{\includegraphics[width=0.48\textwidth]{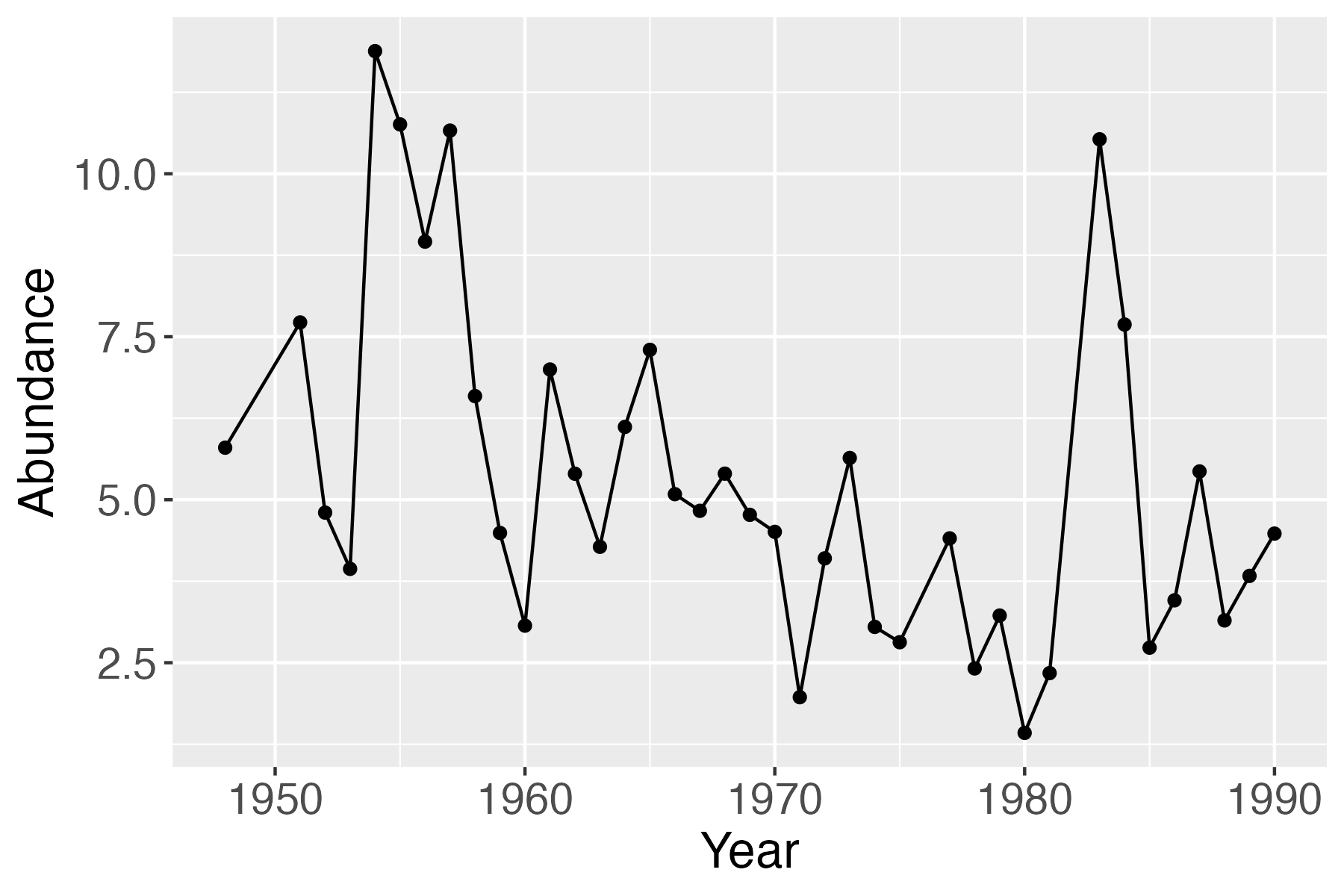}}
  \hfill
  \subfloat[\label{fig:grasshopper-predictions}]{\includegraphics[width=0.48\textwidth]{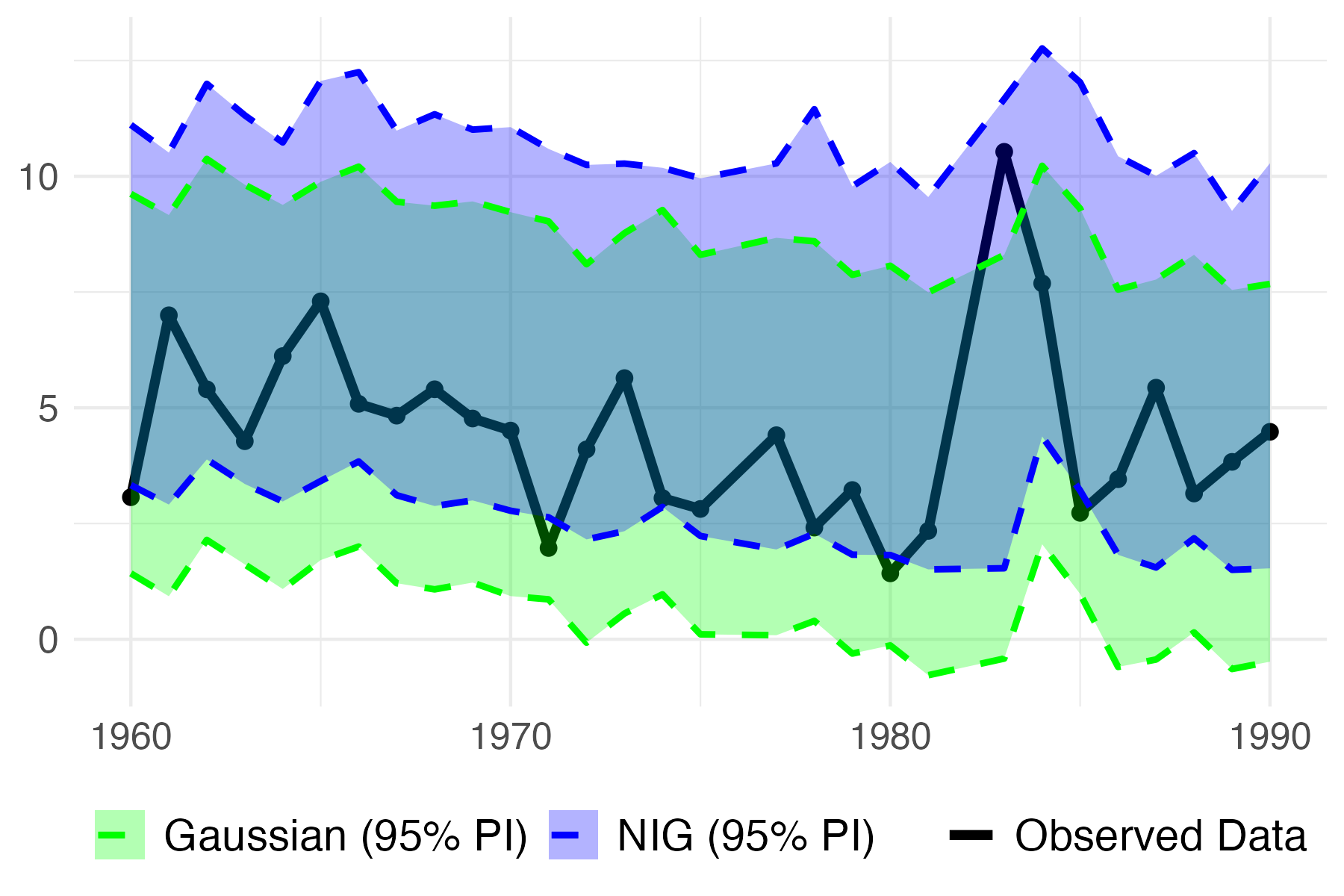}}
  \caption{Grasshopper Population Analysis: (a) Time Series Data and (b) Rolling Window 95\% Prediction Intervals from Gaussian and NIG AR(1) Models}
  \label{fig:grasshopper}
\end{figure}

Let $Y(t)$ denote the observed grasshopper abundance at year $t$. 
To account for long-term population trends, we include a slope fixed effect and specify the observation model as $Y(t) = \beta_0 + \beta_1 t_{\text{scaled}} + W(t) + \epsilon^{\mathbf{Y}}(t)$, where $t_{\text{scaled}}$ is the scaled year, $\epsilon^{\mathbf{Y}}(t) \sim \mathcal{N}(0, \sigma_\epsilon^2)$, and $W(t) = \rho W(t-1) + \epsilon^{\mathbf{W}}(t)$ is an AR(1) process defined on a yearly spaced mesh.
We consider two different specifications for the latent process innovations:
Gaussian AR(1): $\epsilon^{\mathbf{W}}(t) \sim \mathcal{N}(0, \sigma^2)$; 
NIG AR(1): $\epsilon^{\mathbf{W}}(t) \sim \text{NIG}(\mu, \sigma, \nu)$.
The posterior mean and 95\% credible intervals are reported in Table \ref{tab:grasshopper-estimates}. 

The inclusion of the linear trend explicitly accounts for the long-term decline in the population (with $\hat{\beta}_1$ estimated around $-1.05$ to $-0.86$). Both specifications estimate a similar autoregressive dependence ($\hat{\rho} \approx 0.31$ to $0.37$). While the Gaussian AR(1) model forces the process innovations to be symmetric, the NIG AR(1) model reveals significant positive skewness ($\hat{\mu}=2.41$) and heavy-tailed behavior ($\hat{\nu}=1.33$), suggesting that ecological shocks are structurally asymmetric.

\begin{table}[t]
\centering
\caption{Parameter Estimates for Grasshopper Population Models}
\label{tab:grasshopper-estimates}
\begin{tabular}{llcc}
\hline
Parameter type & Parameter & Gaussian Model & NIG Model \\
\hline
Fixed Effects & Intercept $\beta_0$ & 5.35 (4.69, 6.14) & 5.20 (4.64, 5.63) \\
& Slope $\beta_1$ & $-1.05$ ($-1.26$, $-0.88$) & $-0.86$ ($-1.01$, $-0.79$) \\
\hline
Latent Process & Autocorrelation $\rho$ & 0.31 (0.11, 0.53) & 0.37 (0.15, 0.54) \\
& Scale $\sigma$ & 2.13 (1.69, 2.66) & 0.47 (0.21, 0.93) \\
& Skewness $\mu$ & -- & 2.41 (1.75, 3.10) \\
& Shape $\nu$ & -- & 1.33 (0.80, 2.57) \\
\hline
Measurement & Scale $\sigma_\epsilon$ & 0.17 (0.08, 0.33) & 0.84 (0.53, 1.95) \\
\hline
\end{tabular}
\end{table}

Cross-validation and prediction intervals are conducted using a rolling window approach. Using the estimated model, 
we predict data for the current year based on the data from the preceding 10 years.
As shown in Table~\ref{tab:grasshopper-cv-results}, the NIG model provides competitive absolute point predictions (lower MAE, despite a slightly higher MSE) and superior probabilistic calibration (lower CRPS and sCRPS). Thus, the overall scores still favor the NIG model. 

Beyond aggregate metrics, the true advantage of the NIG model lies in its ability to capture asymmetric uncertainty, providing insights into population dynamics that are unavailable from symmetric Gaussian predictions. This asymmetry is obvious when comparing the 95\% predictive intervals in Figure \ref{fig:grasshopper-predictions}. 
Driven by the strong positive skewness ($\hat{\mu}=2.41$) of the latent process, the predictive intervals from the NIG model are systematically shifted upward, with the observed data frequently tracing the lower bounds. This produces higher predictive means and substantially wider upper bounds, and indicates that the NIG model realistically quantifies the asymmetric risk of sudden population outbreaks, whereas the Gaussian model produces symmetric intervals that fail to capture the underlying ecological reality.

\begin{table}[t]
\centering
\caption{Cross-Validation Results for Grasshopper Population Models}
\label{tab:grasshopper-cv-results}
\begin{tabular}{lrrrr}
\hline
 & MAE & MSE & -CRPS & -sCRPS \\
\hline
Gaussian & 1.415 & \textbf{3.601} & 1.032 & 1.368 \\
NIG & \textbf{1.382} & 3.604 & \textbf{0.964} & \textbf{1.337} \\
\hline
\end{tabular}
\end{table}

\subsection{Longitudinal data analysis}

Longitudinal studies provide insights into disease progression but present challenges including different observation times, patient heterogeneity, and irregular sample paths including sudden jumps and spikes.
We analyze clinical kidney function data from the northern English city of Salford, focusing on patients in high-risk groups for chronic kidney disease \citep{diggle2015}. The dataset contains 392,870 measurements from 22,910 primary care patients, with follow-up periods ranging from baseline only to 10.0 years. 
The cohort includes 11,833 males and 11,077 females, with baseline ages spanning 13.7 to 102.1 years (mean: 65.4 years).

\begin{figure}[t]
  \centering
  \includegraphics[width=0.6\textwidth]{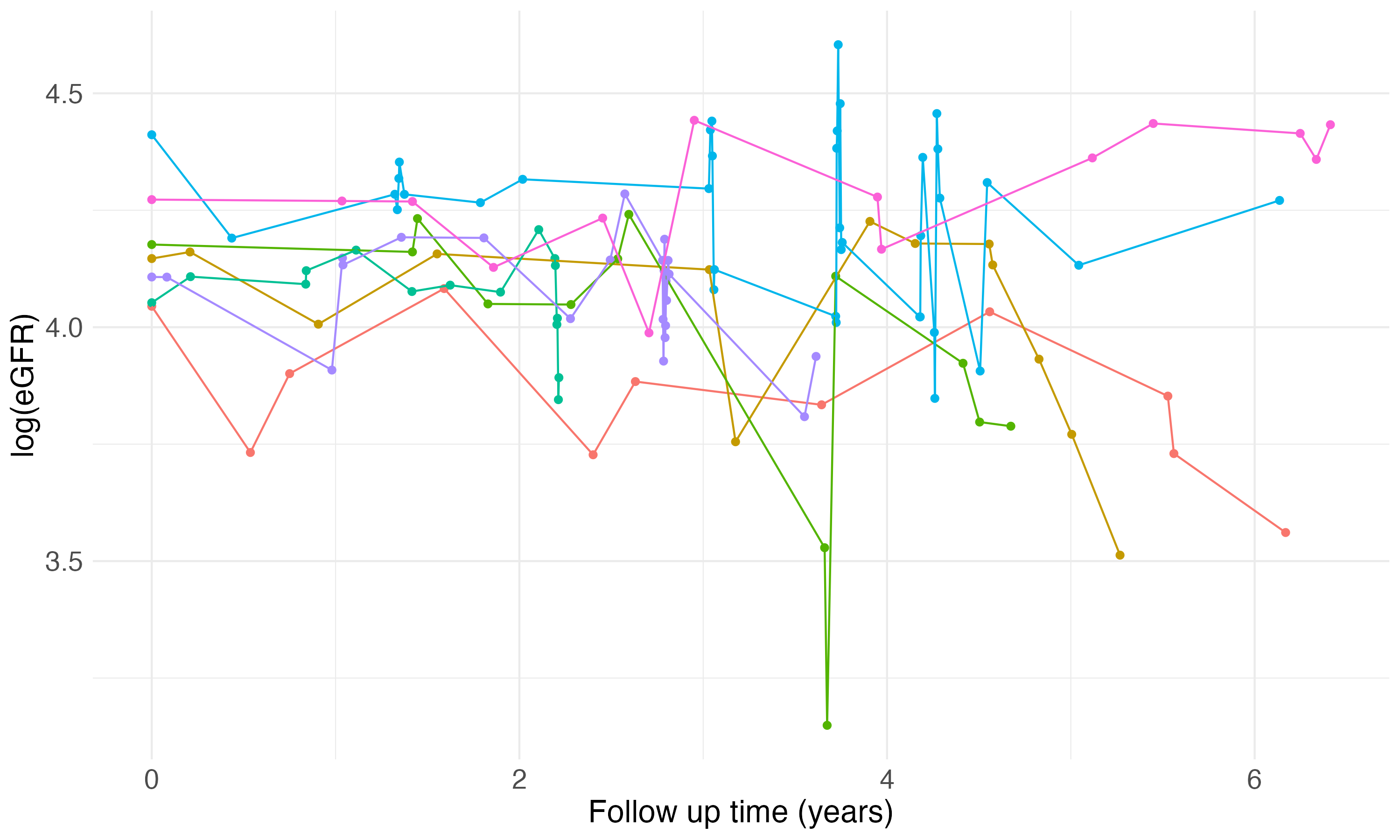}
  \caption{Longitudinal eGFR Trajectories for 7 Randomly Selected Patients Showing Individual Patterns of Kidney Function Change over Follow-Up Time}
  \label{fig:srft-data-sample}
\end{figure}

The outcome variable is eGFR (estimated glomerular filtration rate, in millilitres per minute per 1.73 metres squared of body surface area), which is a proxy measurement for the patient's renal function \citep{levey1999}. 
%
Figure~\ref{fig:srft-data-sample} shows typical longitudinal eGFR trajectories for five randomly selected patients. The trajectories exhibit considerable variability in both the frequency and magnitude of measurements, reflecting the real-world nature of clinical data collection. The irregular follow-up patterns arise from routine clinical practice, where patients are monitored based on medical necessity rather than fixed schedules. 

We analyzed a random selection of 500 patients with more than 10 measurements each, containing a total of $13196$ observations.
We consider the following model specification for $i$-th patient and $j$-th measurement:
\begin{equation}
\log(\text{eGFR}_{ij}) = \beta_0 + \beta_1 \text{sex}_i + \beta_2 \text{bage}_i + \gamma_i + W_i(t_{ij}) + \epsilon^{\mathbf{Y}}_{ij}
\end{equation}
where $\gamma_i \sim N(0, \sigma_\gamma^2)$ are patient-specific random intercepts, $W_i(t)$ is a Matérn SPDE model ($\alpha = 2$) along follow-up time, and $\epsilon^{\mathbf{Y}}_{ij} \sim N(0, \sigma_\epsilon^2)$ is measurement error. 
We consider two model specifications for the Matérn process noise; Gaussian and NIG. 

\begin{table}[t]
\centering
\caption{Parameter Estimates for Kidney Function Models (500 patients)}
\label{tab:srft-estimates}
\begin{adjustbox}{max width=0.90\textwidth}
\begin{tabular}{llcc}
\hline
Parameter type & Parameter & Gaussian Model & NIG Model \\
\hline
Fixed Effects & Intercept & 4.07 (4.00, 4.14) & 4.11 (4.05, 4.18) \\
& Sex & $-0.070$ ($-0.076$, $-0.063$) & $-0.083$ ($-0.089$, $-0.079$) \\
& Baseline age & $-0.009$ ($-0.010$, $-0.009$) & $-0.011$ ($-0.011$, $-0.010$) \\
\hline
Random Effects &  Scale $\sigma_\gamma$ & 0.025 (0.020, 0.031) & 0.016 (0.012, 0.022) \\
\hline
Process & Range $\kappa$ & 1.17 (1.10, 1.26) & 1.15 (1.07, 1.24) \\
& Scale $\sigma$ & 0.78 (0.71, 0.87) & 1.26 (1.13, 1.39) \\
& Skewness $\mu$ & -- & $-0.05$ ($-0.10$, $0.00$) \\
& Shape $\nu$ & -- & 0.075 (0.061, 0.092) \\
\hline
Measurement & Scale $\sigma_\epsilon$ & 0.211 (0.207, 0.215) & 0.179 (0.175, 0.182) \\
\hline
\end{tabular}
\end{adjustbox}
\end{table}

\begin{table}[t]
\centering
\caption{Cross-Validation Results for Kidney Function Models (500 patients)}
\label{tab:srft-cv-results}
\begin{tabular}{lrrrr}
\hline
 & MAE & MSE & -CRPS & -sCRPS \\
\hline
Gaussian & 0.142 & 0.047 & 0.113 & 0.272 \\
NIG & \textbf{0.131} & \textbf{0.038} & \textbf{0.100} & \textbf{0.206} \\
\hline
\end{tabular}
\end{table}

Parameter estimates with 95\% credible intervals and cross-validation results are reported in Table \ref{tab:srft-estimates} and Table \ref{tab:srft-cv-results}, respectively.
The NIG model achieves better results in all metrics, demonstrating the advantages of a non-Gaussian approach for complex longitudinal medical data, which is in line with the results in \cite{asar2020}.

\subsection{Climate Reanalysis Data}
We now explore a spatial application using climate reanalysis data, where we compare Gaussian and non-Gaussian models, as well as stationary and non-stationary models. The data is derived from the Experimental Climate Prediction Center Regional Spectral Model (ECPC-RSM), initially developed for the North American Regional Climate Change Assessment Program (NARCCAP) using NCEP/DOE Reanalysis \citep{mearns2009,mearns2014}.
We follow a similar analysis as in \cite{bolin2020b}, focusing on the average summer precipitation over the conterminous United States. To fit the models, we use data from the first 10 years (1979--1988) as independent temporal replicates. To evaluate predictive performance, we use the data from the year 1989 as a hold-out test set, which contains 4,112 grid cells.
Following standard practice, we apply a cube root transformation to the data to improve Gaussianity and subtract temporal means, yielding the model
$Y_{ij} = W_j(\mathbf{s}_i) + \epsilon^{\mathbf{Y}}_{ij}$,
where $Y_{ij}$ is the observation at location $\mathbf{s}_i$ and year $j$, $\epsilon^{\mathbf{Y}}_{ij} \sim \mathcal{N}(0, \sigma^2)$ are measurement errors and $\{W_j(\mathbf{s})\}_j$ are independent spatial field realizations.
Figure~\ref{fig:precip-data} displays the spatial distribution of mean summer precipitation across the contiguous United States for 1979. 

Given the large spatial domain, we expect non-stationary behavior due to topography, latitude effects, and regional climate differences. We model $W$ as a Mat\'ern SPDE field and compare four specifications with either Gaussian or NIG driving noise, and either stationary or non-stationary parameters. The non-stationary models allow spatial variation in Matérn parameters through an altitude covariate and Fourier basis functions as described in \cite{bolin2020b}: 
\begin{equation*}
\log \kappa(\mathbf{s}) = \kappa_0 + \kappa_a \psi_a(\mathbf{s}) + \sum_{i,j=1}^2 \sum_{k,\ell=1}^2 \kappa_{ij}^{k\ell} \psi_i^k(\tilde{s}_1) \psi_j^\ell(\tilde{s}_2),
\label{eq:nonstat-kappa}
\end{equation*}
with analogous formulation for $\log \sigma(\mathbf{s})$, and for the NIG parameters $\mu(\mathbf{s})$ and $\log \nu(\mathbf{s})$. This adds 18 parameters for each of the Matérn parameters to capture topographic effects and large-scale spatial trends.

\begin{figure}[t]
\centering
\includegraphics[width=0.6\textwidth]{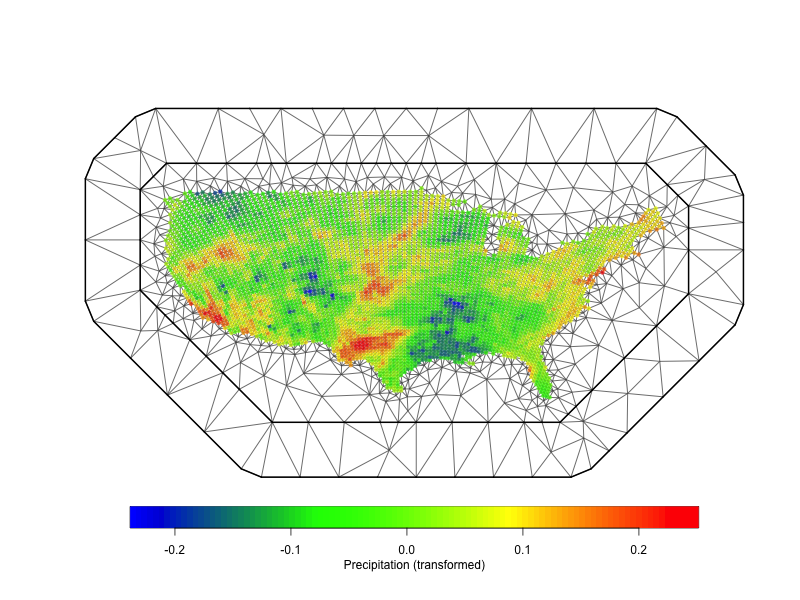}
\caption{Transformed Precipitation Data for the United States with the Triangulation Mesh Used for Spatial Modeling.}
\label{fig:precip-data}
\end{figure}

\begin{table}[t]
\centering
\caption{Prediction Scores of Climate Precipitation Models on 1989 Test Data}
\begin{tabular}{lrrrrr}
\hline
\multirow{2}{*}{Model} & \multirow{2}{*}{Fit time(s)} & \multicolumn{4}{c}{Prediction Accuracy} \\
\cline{3-6}
 & & MAE & MSE & -CRPS & -sCRPS \\
\hline
Gaussian stationary & 89 & 0.01920 & \textbf{0.00065} & 0.01396 & -0.78057 \\
Gaussian non-stationary & 249 & \textbf{0.01917} & \textbf{0.00065} & \textbf{0.01367} & \textbf{-0.81082} \\
NIG stationary & 75 & 0.01921 & 0.00066 & 0.01389 & -0.78689 \\
NIG non-stationary & 206 & 0.01951 & 0.00067 & 0.01387 & -0.80018 \\
\hline
\end{tabular}
\label{tab:cv-scores}
\end{table}

The models are evaluated on the basis of their spatial prediction performance on the 1989 test year. The results in Table~\ref{tab:cv-scores} show that the Gaussian non-stationary model outperforms the others across all evaluation metrics. Furthermore, the benefits of incorporating non-stationarity are evident regardless of the noise assumption: the NIG non-stationary model achieves better CRPS and sCRPS than both stationary models. This underscores the importance of accounting for spatial heterogeneity in precipitation modeling. Conversely, the non-Gaussian specifications provide no additional overall benefit, suggesting that the initial cube root transformation brings the data sufficiently close to Gaussianity. Thus, for this data, non-stationarity is the main driver of predictive improvement. 

\subsection{Space-time bivariate model for the Wind Data}

As a final application, we illustrate the LLnGM framework to bivariate space-time data, specifically using wind vector fields from the NCEP/NCAR Reanalysis Project \cite{NCEPNCAR}. Our study is centered on the Balkan Peninsula (10°E-36°E, 34°N-48°N) over a 60-hour period in 2015, divided into 10 six-hour intervals, resulting in 1,078 space-time observations of the zonal (u-wind) and meridional (v-wind) components.
Figure~\ref{fig:wind-data} shows the wind data over the Balkan Peninsula region.

Let $\mathbf{Y}(\mathbf{s},t) = (Y_u(\mathbf{s},t), Y_v(\mathbf{s},t))^\top$ denote the bivariate wind vector at spatial location $\mathbf{s} = (s_1, s_2)$ and time $t$, where $s_1$ and $s_2$ represent longitude and latitude, respectively. We consider the additive decomposition $\mathbf{Y}(\mathbf{s},t) = \mathbf{X}(\mathbf{s})\boldsymbol{\beta} + \mathbf{W}(\mathbf{s},t) + \boldsymbol{\epsilon}^{\mathbf{Y}}(\mathbf{s},t)$. Here, the mean term $\mathbf{X}(\mathbf{s})\boldsymbol{\beta}$ captures spatial trends via fixed effects, where $\mathbf{X}(\mathbf{s})$ is a block-diagonal design matrix of spatial covariates and $\boldsymbol{\beta}$ is the corresponding vector of regression coefficients, such that the expected values for the $u$- and $v$-components are $\beta_{u,0} + \beta_{u,1}s_1$ and $\beta_{v,0} + \beta_{v,2}s_2$, respectively. The term $\mathbf{W}(\mathbf{s},t)$ is a zero-mean bivariate latent space-time field, modeled using the multivariate AR construction with operator matrix \eqref{eq:tensor-multivariate-construction} with Gaussian or NIG noise,  to represent the zonal ($u$-wind) and meridional ($v$-wind) components. This model integrates temporal, spatial, and cross-correlations through a separable space-time bivariate structure. Finally, $\boldsymbol{\epsilon}^{\mathbf{Y}}(\mathbf{s},t)$ denotes an independent measurement-noise term.

\begin{figure}[t]
  \centering
  \includegraphics[width=0.7\textwidth]{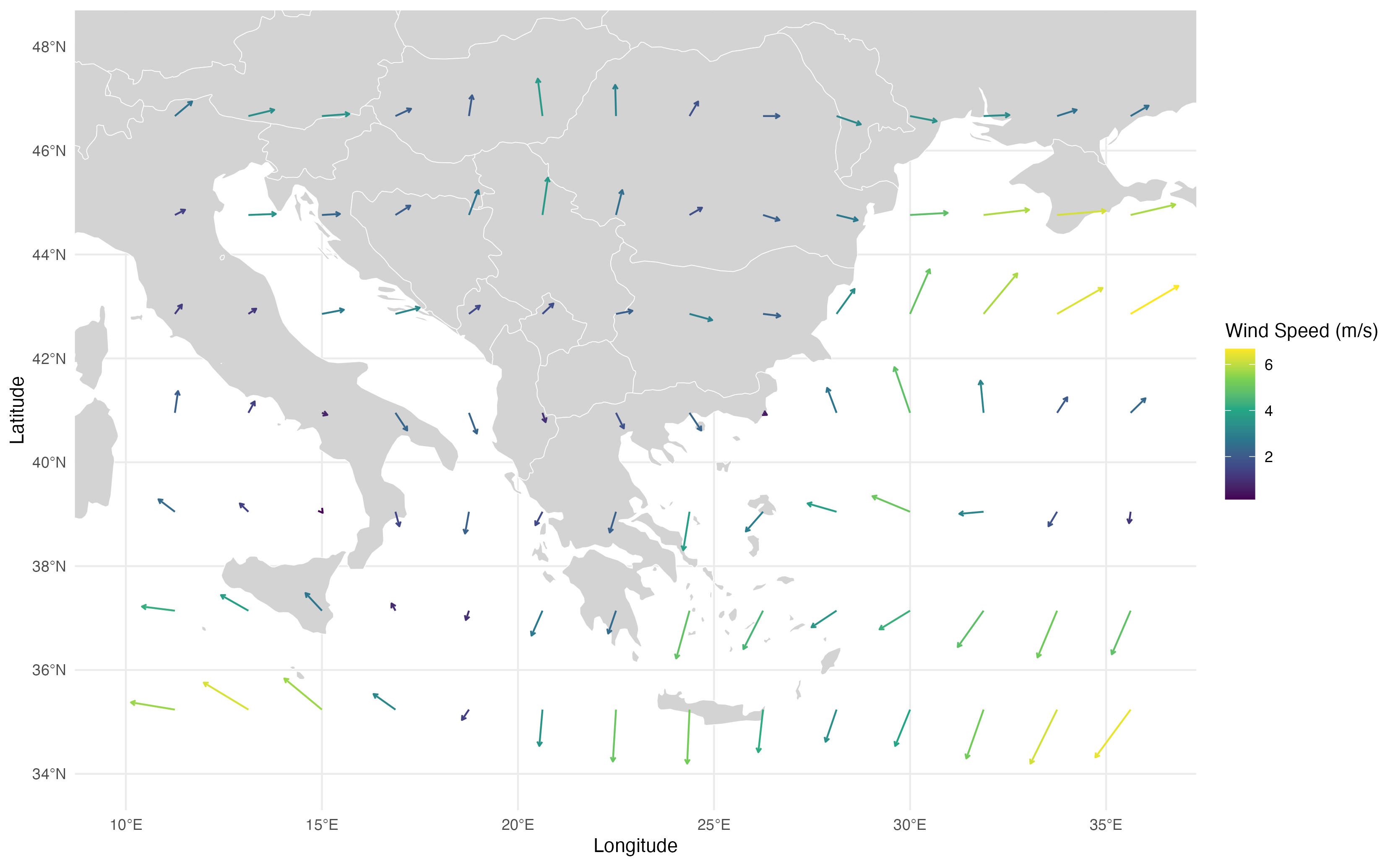}
  \caption{Wind Data over the Balkan Peninsula Region. Arrows Indicate Wind Direction, While Colors Represent Wind Speed Magnitude.}
  \label{fig:wind-data}
\end{figure}
\begin{table}[t]
\centering
\caption{Mean Cross-Validation Scores for Wind Data Models}
\begin{tabular}{lcrrrr}
\hline
\multirow{2}{*}{Model} & \multirow{2}{*}{Fit Time (min)} & \multicolumn{4}{c}{Prediction Accuracy} \\
\cline{3-6}
 &  & MAE & MSE & -CRPS & -sCRPS \\
\hline
Gaussian & 1.48 & 0.832 & \textbf{1.046} & 0.636 & 1.138 \\
NIG & 7.28 & \textbf{0.809} & 1.113 & \textbf{0.621} & \textbf{1.116} \\
\hline
\end{tabular}
\label{tab:cv-scores-wind}
\end{table}

We again compare the models through 10-fold cross-validation. The NIG model outperforms the Gaussian model in absolute point prediction (lower MAE) and probabilistic calibration (lower CRPS and sCRPS). Although the Gaussian model yields a slightly lower MSE, the overall scores favor the NIG model (Table~\ref{tab:cv-scores-wind}). This highlights the NIG model's ability to capture non-Gaussian features of wind, such as asymmetry, heavy tails, and intermittent behavior, as well as complex dependencies between zonal and meridional components due to atmospheric dynamics.


\section{Comparison with other methods and software}\label{sec:comparison}

In this section, we compare our LLnGM framework with the main alternative approaches for fitting non-Gaussian latent models.
As mentioned in the introduction, common Bayesian modeling packages such as \texttt{R-INLA} \citep{r-inla} and \texttt{brms} \citep{brms} do not support LLnGMs.  
The recent package \texttt{ngvb} \citep{cabral2024} provides algorithms for fitting latent non-Gaussian models. 
However, the package is not maintained, so we instead compared with the more general software \texttt{Stan} \citep{stan} and \texttt{TMB} \citep{kristensen2016tmb}.
%

We fit an AR(1) model with NIG-distributed innovations to 500 simulated observations from the model, subject to Gaussian measurement error. We compare three different approaches:
\texttt{ngme2}: Rao-Blackwellized SGD optimization 
followed by posterior exploration using SGLD to draw 2,000 samples.
\texttt{Stan}: Full MCMC with 1,000 warm-up iterations and 2,000 sampling iterations across four parallel chains.
\texttt{TMBStan}: MCMC with identical iteration settings (1,000 warm-up, 2,000 sampling) and four parallel chains, utilizing a Laplace approximation to integrate out the latent field and mixing variables.
\change{All three methods (\texttt{ngme2}, \texttt{Stan}, and \texttt{TMBStan}) use the same weakly informative priors on shared parameters.}
The code is available at \url{https://github.com/MuggleJinx/LLNGM-code}.

\begin{table}[t]
\centering
\caption{Comparison of True Parameter Values, Posterior Means, and 95\% Credible Intervals for Each Method}
\label{tab:parameter-comparison}
\begin{tabular}{lcccc}
\hline
Parameter & True Value & \texttt{ngme2} & STAN & TMBSTAN \\
\hline
$\sigma_{\epsilon}$ & 1.00 & 1.00 (0.85, 1.19) & 1.04 (0.75, 1.27) & 1.39 (1.23, 1.58) \\
$\rho$ & 0.80 & 0.80 (0.77, 0.82) & 0.80 (0.77, 0.82) & 0.80 (0.78, 0.82) \\
$\mu$ & 3.00 & 3.03 (2.73, 3.34) & 3.14 (2.47, 3.92) & 3.30 (3.12, 3.50) \\
$\sigma$ & 2.00 & 1.72 (1.15, 2.21) & 1.26 (0.37, 2.59) & 0.10 (0.01, 0.48) \\
$\nu$ & 0.40 & 0.36 (0.24, 0.48) & 0.41 (0.22, 0.73) & 0.43 (0.36, 0.53) \\
\hline
Noise KLD & -- & 0.011 & 0.075 & 2.026 \\
Elapsed time (s) & -- & 20.0 & 37.7 & 359.7 \\
\hline 
\end{tabular}

\begin{minipage}{0.90\textwidth}
\footnotesize
\textit{Note: \texttt{ngme2} directly reports $\rho$, $\mu$, $\sigma$, $\nu$, and $\sigma_{\epsilon}$. STAN and TMBSTAN report transformed parameters $\rho_{\text{un}} = \log((1+\rho)/(1-\rho))$ (unconstrained AR(1) parameter), $\log(\sigma)$, $\log(\nu)$, and $\log(\sigma_{\epsilon})$, which have been back-transformed for comparison. The noise KLD is calculated by comparing the true and estimated noise distributions.}
\end{minipage}
\end{table}

The parameter estimates with 95\% credible intervals are shown in Table \ref{tab:parameter-comparison}, and the full results are provided in Appendix~\ref{app:comparison}.
Across the three approaches, \texttt{ngme2} produced the most stable and computationally efficient fit for this latent NIG-AR(1) model, completing the estimation in approximately 8.4 seconds with proper convergence diagnostics and sampling process in 11.6 seconds.

The \texttt{Stan} fit required 37.7 seconds and exhibited clear pathologies: 10 divergent transitions, low Bayesian Fraction of Missing Information (BFMI) across 4 chains, and R-hat values up to 2.11 indicating poor chain mixing. Heavy-tailed latent innovations create a funnel geometry between the parameters ($\sigma$, $\nu$) and the latent states, forcing NUTS to take extremely small steps to navigate this challenging posterior landscape. This results in severely collapsed effective sample sizes (as low as 4 for $\log(\sigma)$) and biased parameter estimates, particularly for $\hat{\sigma}$, which was severely underestimated (1.26 vs. true value 2.00) because the sampler fails to adequately explore the tails and low-probability regions. 

\texttt{TMBStan} uses a second-order Laplace approximation to integrate out the latent field and the mixing variables, followed by HMC on the remaining parameters. While this approach resolved the sampling pathologies and yielded perfect chain mixing (R-hat of 1.00), it introduced severe drawbacks. First, evaluating the dense Hessian for the Laplace approximation at every leapfrog step caused the computational wall-time to increase to nearly 360 seconds. Second, the Gaussian assumption underpinning the Laplace approximation around the mode entirely fails to capture the heavy-tailed, asymmetric true posterior of the NIG model. Consequently, TMBStan severely biased the results by collapsing the latent scale parameter ($\hat{\sigma} \approx 0.10$) and erroneously absorbing the excess variance into the Gaussian measurement error ($\hat{\sigma}_{\epsilon} \approx 1.39$).

The estimation of $\rho$ remained robust across all methods. However, to directly compare how well each method captures the distribution of the driving noise, we use the Kullback-Leibler divergence (KLD) between the true and estimated noise distributions. \texttt{ngme2} achieves the lowest KLD of 0.011, successfully recovering the non-Gaussian structure of the driving noise. \texttt{Stan} had a KLD of 0.075, while \texttt{TMBStan} completely misses the true distribution with a KLD of 2.026. 

In summary, 
\texttt{ngme2} performs better both in computational efficiency and in estimation accuracy compared to the alternatives. 
This highlights the need for specialized frameworks like \texttt{ngme2} when latent non-Gaussianity is essential.

\section{Discussion}\label{chp:discussion}

We have introduced the LLnGM framework as a unified approach that extends latent Gaussian models to accommodate non-Gaussian behaviors while maintaining computational efficiency and scalability. The flexibility of the methodology was illustrated in four applications, including novel spatial and spatio-temporal non-Gaussian models.
The framework leverages mean-variance mixture representations through the GH family of distributions, enabling flexible modeling of complex dependencies including temporal, spatial, multivariate, and spatio-temporal structures. Through the exploitation of sparse precision matrices and stochastic gradient Langevin dynamics (SGLD) methodology enhanced with Rao-Blackwellization, the approach achieves computational tractability for high-dimensional problems. The accompanying \texttt{ngme2} R package provides a user-friendly implementation, addressing the current gap in available software for latent non-Gaussian modeling. 

The framework has maximum likelihood estimation (MLE) as a special case of the Bayesian formulation. The transition from MAP estimation to MLE requires only a straightforward modification: rather than optimizing the log-posterior $\log \pi(\boldsymbol{\theta} | \mathbf{Y}) = \log \pi(\mathbf{Y} | \boldsymbol{\theta}) + \log \pi(\boldsymbol{\theta})$, we maximize the log-likelihood $\ell(\boldsymbol{\theta} | \mathbf{Y}) = \log \pi(\mathbf{Y} | \boldsymbol{\theta})$. Consequently, the existing SGD methodology remains applicable by simply removing the gradient of the log-prior from the objective function.



The LLnGM framework establishes a robust foundation for several promising research directions. A natural extension is to incorporate additional model classes into the framework. For example, extending the rational approximation of fractional SPDEs \citep{bolin2020b} to non-Gaussian settings is a natural next step.
The framework could also be extended to handle non-Gaussian random fields defined on metric graphs \citep{bolin2023a}, to address applications where the underlying spatial domain has a graph structure. 
Another important direction involves generalizing the observation process by allowing parameterization of the observation matrix $\mathbf{A}$, which would enable more flexible modeling of how the latent process relates to the observed data. This could accommodate scenarios with time-varying observation patterns, measurement uncertainties, or complex sampling designs, thereby broadening the applicability of the framework to a wider range of real-world problems where the observation mechanism itself is of scientific interest.

\section*{Appendix}
\appendix
\providecommand\K{\mathbf{K}}
\providecommand\D{\mathbf{D}}
\providecommand\C{\mathbf{C}}
\providecommand\G{\mathbf{G}}
\providecommand\h{\mathbf{h}}
\providecommand\V{\mathbf{V}}
\providecommand\W{\mathbf{W}}
\providecommand\X{\mathbf{X}}
\providecommand\A{\mathbf{A}}
\providecommand\B{\mathbf{B}}
\providecommand\M{\mathbf{M}}
\providecommand\Y{\mathbf{Y}}
\providecommand\bs{\boldsymbol}

\section{Operators for Different Model Classes}\label{app:operators}

Within the LLnGM framework, the operator $\mathbf{K}$ serves as the primary mechanism for specifying the latent dependence structure.
Different modeling assumptions are realized through different choices of $\mathbf{K}$.
To emphasize modularity, we organize examples by how the latent dependence operator is constructed and composed in the rest of this section. 
In Table~\ref{tab:operator-matrices}, we summarize the operator matrices for used model classes, illustrating the diversity of models which can be represented within a unified operator-based formulation.

\subsection{Temporal operators}

A large class of time series models can be written as linear difference (or recursion) equations of the form
\begin{equation}
\mathbf{K}_t(\boldsymbol{\theta}_t)\,\mathbf{W} = \boldsymbol{\epsilon},
\label{eq:many-temporal-Kw}
\end{equation}
where $\mathbf{W}=(w_1,\ldots,w_T)^\top$ is a time series vector and $\mathbf{K}_t(\boldsymbol{\theta}_t)$ is a sparse (typically banded) matrix.

\emph{Autoregressive models.}
For an AR($p$) process, $\phi(B)w_t=\epsilon_t$ with $\phi(B)=1-\phi_1B-\cdots-\phi_pB^p$, the corresponding
operator matrix $\mathbf{K}_t(\boldsymbol{\theta}_t)$ is banded with bandwidth $p$, reflecting the Markov property of order $p$. For example, for an AR(1) process $w_t - \phi w_{t-1} = \epsilon_t$ with $t=1,\dots,T$, the operator matrix is given by
\begin{equation}
\mathbf{K}_{\text{AR1}}(\phi) = 
\begin{pmatrix} 
\sqrt{1-\phi^2} & 0 & \cdots & 0 \\
-\phi & 1 & \cdots & 0 \\
\vdots & \ddots & \ddots & \vdots \\
0 & \cdots & -\phi & 1
\end{pmatrix}.
\label{eq:AR1-matrix}
\end{equation}
The term $\sqrt{1-\phi^2}$ in the first row ensures that the variance of $w_1$ matches the stationary variance of the process. 
\change{
Since the AR process is indexed by integers, we have $\mathbf{h} = \mathbf{1}$.
}

\emph{Random-walk priors.}
Random walk models correspond to difference operators. \change{Given a set of ordered locations $x_1 < x_2 < \dots < x_n$}, for a first-order random walk RW(1), 
\change{$\Delta w_i := w_i - w_{i-1} = \epsilon_i$}, the operator matrix \change{$\mathbf{K}_{\text{RW1}} \in \mathbb{R}^{(n-1) \times n}$} is the first-difference matrix:
\begin{equation}
\change{\mathbf{K}_{\text{RW1}}} = 
\begin{pmatrix} 
-1 & 1 & 0 & \cdots & 0 \\
0 & -1 & 1 & \cdots & 0 \\
\vdots & \vdots & \ddots & \ddots & \vdots \\
0 & 0 & \cdots & -1 & 1
\end{pmatrix}.
\end{equation}
For a second-order random walk RW(2), \change{$\Delta^2 w_i := w_i - 2w_{i-1} + w_{i-2} = \epsilon_i$}, the operator matrix \change{$\mathbf{K}_{\text{RW2}} \in \mathbb{R}^{(n-2) \times n}$} is the second-difference matrix:
\begin{equation}
\change{\mathbf{K}_{\text{RW2}}} = 
\begin{pmatrix} 
1 & -2 & 1 & 0 & \cdots & 0 \\
0 & 1 & -2 & 1 & \cdots & 0 \\
\vdots & \vdots & \ddots & \ddots & \ddots & \vdots \\
0 & 0 & \cdots & 1 & -2 & 1
\end{pmatrix}.
\end{equation}
These models are widely used as flexible smoothness priors, and their sparse banded operator matrices make them computationally attractive for large-scale latent modeling. Furthermore, the innovations \change{$\epsilon_i$} are assumed to have variance \change{$\sigma^2 h_i$, where $h_i = x_{i} - x_{i-1}$} represents the mesh weights defining the spacing between consecutive locations. For regular grids, \change{$h_i = h$} is a constant distance, yielding standard random walk behavior. For irregular grids, varying \change{$h_i$} allows the innovation variance to adapt to local sampling densities, ensuring scale invariance and proper weighting across different spatial or temporal gaps.

\emph{Ornstein--Uhlenbeck (OU) models.}
The OU process is a continuous-time analogue of an AR(1) process, defined by the stochastic differential equation (SDE) $dX_t = -\theta X_t dt + \sigma dB_t$, where $B_t$ denotes standard Brownian motion. In the Gaussian setting, the process admits an exact discrete-time solution over a temporal mesh $t_1 < t_2 < \dots < t_n$ \change{with potentially non-uniform mesh weights (step sizes) $h_i = t_{i+1} - t_i$. This solution follows the recursion $w_{i+1} = \rho_i w_i + \epsilon_i$, where the autoregressive coefficient is $\rho_i = \exp(-\theta h_i)$.} This yields a banded operator matrix $\mathbf{K}(\theta)$ of the form:
\begin{equation}
\mathbf{K}(\theta) = 
\begin{pmatrix} 
\sqrt{1-\rho_1^2} & 0 & 0 & \cdots & 0 \\
-\rho_1 & 1 & 0 & \cdots & 0 \\
0 & -\rho_2 & 1 & \cdots & 0 \\
\vdots & \vdots & \ddots & \ddots & \vdots \\
0 & 0 & \cdots & -\rho_{n-1} & 1
\end{pmatrix},
\label{eq:OU-matrix}
\end{equation}
where the entry $\sqrt{1-\rho_1^2}$ ensures the variance of $w_1$ aligns with the stationary variance. While $\change{\mathbf{K}(\theta)}$ is derived from the Gaussian-driven SDE, we utilize this structure as a robust operator-based approximation for non-Gaussian OU models. \change{Importantly, the non-Gaussian innovations $\epsilon_i$ are intrinsically scaled by the mesh weights $h_i$. By explicitly incorporating $h_i$ into both the temporal dependencies ($\rho_i$) and the innovation variance, the framework effectively captures non-Gaussian dynamics and maintains consistent statistical properties across irregular observation intervals.}

\subsection{Spatial operators}
A canonical spatial example is the Mat\'ern SPDE construction, where the latent field $w(\cdot)$ is defined
as the (weak) solution to
\begin{equation}
(\kappa^2 - \Delta)^{\alpha/2} \, w(s) = \mathcal{W}(s),
\label{eq:many-matern-spde}
\end{equation}
with $\mathcal{W}$ denoting the spatial white noise. For the case $\alpha=2$, after finite element discretization using \change{hat} functions $\{\psi_i\}_{i=1}^n$ on a mesh, \eqref{eq:many-matern-spde} yields a sparse linear system
$\mathbf{K}(\boldsymbol{\theta})\,\mathbf{W} = \boldsymbol{\epsilon}$,
where $\mathbf{W}\in\mathbb{R}^{n}$ collects nodal basis coefficients and $\mathbf{K}(\boldsymbol{\theta})$ is the sparse operator matrix defined as:
$\mathbf{K}(\boldsymbol{\theta}) = \kappa^2 \mathbf{C} + \mathbf{G}$.
Here, $\mathbf{C}$ is the \emph{mass matrix} with entries $C_{ij} = \langle \psi_i, \psi_j \rangle$, which represents the overlap of basis functions (often diagonalized via a lumping approximation to preserve sparsity and simplify the innovation structure). \change{Specifically, this lumping yields the spatial mesh weights $h_i = \sum_j C_{ij}$, which represent the local area associated with node $i$. Analogous to the 1D models, these weights $h_i$ govern the variance scaling of the innovations $\boldsymbol{\epsilon}$, allowing the framework to consistently handle non-Gaussian noise over irregular spatial triangulations.} The matrix $\mathbf{G}$ is the \emph{stiffness matrix} with entries $G_{ij} = \langle \nabla \psi_i, \nabla \psi_j \rangle$, representing the discretized Laplacian operator $\Delta$. This construction links the physical parameters of the SPDE directly to a sparse matrix representation, providing a scalable route to Mat\'ern-type dependence on irregular domains and meshes.

\subsection{Replicated fields via block operators}

Many data sets consist of repeated realizations of the same underlying dependence structure. Let $\mathbf{W}^{(r)}\in\mathbb R^{n}$ denote the latent field vector for replicate $r\in\{1,\ldots,R\}$, and stack
\[
\boldsymbol{\mathcal{W}} := \big((\mathbf{W}^{(1)})^\top,\ldots,(\mathbf{W}^{(R)})^\top\big)^\top \in \mathbb R^{Rn}.
\]
If each replicate follows the same operator equation
\[
\mathbf{K}(\boldsymbol{\theta})\,\mathbf{W}^{(r)} = \boldsymbol{\epsilon}^{(r)},
\qquad r=1,\ldots,R,
\]
then the stacked system can be written compactly as
$\mathbf{\mathcal{K}}(\boldsymbol{\theta})\,\boldsymbol{\mathcal{W}} = \boldsymbol{\mathcal{\epsilon}}$, with 
$\mathbf{\mathcal{K}}(\boldsymbol{\theta}) := \mathbf{I}_R \otimes \mathbf{K}(\boldsymbol{\theta})$,
where $\boldsymbol{\mathcal{\epsilon}}:=((\boldsymbol{\epsilon}^{(1)})^\top,\ldots,(\boldsymbol{\epsilon}^{(R)})^\top)^\top$ and $\otimes$ denotes the
Kronecker product. Since $\mathbf{K}(\boldsymbol{\theta})$ is sparse, $\mathbf{\mathcal{K}}(\boldsymbol{\theta})$ is a sparse matrix with a block-diagonal
structure. 
\change{Crucially, because the replicates share the same underlying spatial or temporal mesh, the vector of mesh weights $\mathbf{h} = (h_1, \dots, h_n)^\top$ is also replicated. The stacked weight vector can be elegantly expressed as $\mathbf{1}_R \otimes \mathbf{h}$ (where $\mathbf{1}_R$ is an $R$-dimensional vector of ones). This ensures that the local variance scaling for the stacked innovations $\boldsymbol{\mathcal{\epsilon}}$ remains mathematically consistent across all $R$ realizations.}




\subsection{Tensor product models}
\label{sec:tensor-product-models}

Tensor product constructions provide a general and powerful framework for building high-dimensional models from simpler, lower-dimensional components. Let $\mathbf{W}$ be a latent field discretized so that the stacked latent vector has dimension $n = \prod_{i=1}^d n_i$. For a composite model formed from $d$ components, let $\mathbf{K}_i(\boldsymbol{\theta}_i) \in \mathbb{R}^{n_i \times n_i}$ denote the sparse Markov operator matrices associated with each dimension $i$. The global latent operator matrix is constructed via the Kronecker product:
\begin{equation}
\mathbf{K}(\boldsymbol{\theta}) = \mathbf{K}_d(\boldsymbol{\theta}_d) \otimes \dots \otimes \mathbf{K}_1(\boldsymbol{\theta}_1),
\label{eq:K-kronecker}
\end{equation}
acting on the stacked coefficient vector $\mathbf{W} \in \mathbb{R}^n$.

Conditional on the mixing variables, the latent component is Gaussian and the operator equation takes the form
\[
\mathbf{K}(\boldsymbol{\theta})\,\mathbf{W} = \boldsymbol{\epsilon},
\qquad
\boldsymbol{\epsilon}\mid \mathbf{V} \sim \mathcal{N}\!\big(\mathbf{m}(\mathbf{V}),\mathbf{D}_\mathbf{V}\big),
\]
where $\mathbf{D}_\mathbf{V}$ is the diagonal matrix of conditional variances. Throughout, we employ the centered mean specification
\[
\mathbf{m}(\mathbf{V})=\mu(\mathbf{V}-\mathbf{h}), \qquad E[\mathbf{V}]=\mathbf{h},
\]
so that $E[\boldsymbol{\epsilon}]=\mathbf{0}$.
Consistent with the tensor product structure of the operator, the multi-dimensional mesh weights are also formed multiplicatively. By identifying the component mesh weights as column vectors $\mathbf{h}_i \in \mathbb{R}^{n_i}$, the combined weight column vector is given by the Kronecker product $\mathbf{h} = \mathbf{h}_d \otimes \dots \otimes \mathbf{h}_1$. This ensures that the local variance of the innovations $\boldsymbol{\epsilon}$ is properly scaled by the full multi-dimensional cell volumes.

The conditional precision matrix of $\mathbf{W}$ is defined as
\[
\mathbf{Q}(\mathbf{V},\boldsymbol{\theta}) = \mathbf{K}(\boldsymbol{\theta})^\top \mathbf{D}_\mathbf{V}^{-1} \mathbf{K}(\boldsymbol{\theta}).
\]
Consequently, this conditional precision matrix does not generally admit a Kronecker factorization. However, in spatial statistics, separability is typically defined in terms of the \emph{unconditional covariance} (i.e., second-order structure). By the law of total covariance,
\[
\mathrm{Cov}(\boldsymbol{\epsilon})
=
E\!\left[\mathrm{Cov}(\boldsymbol{\epsilon}\mid \mathbf{V})\right]
+
\mathrm{Cov}\!\left(E[\boldsymbol{\epsilon}\mid \mathbf{V}]\right)
=
E[\mathbf{D}_{\mathbf{V}}] + \mathrm{Cov}\!\big(\mathbf{m}(\mathbf{V})\big).
\]
Under the centered mean specification $\mathbf{m}(\mathbf{V})=\mu(\mathbf{V}-\mathbf{h})$, this becomes
\[
\mathrm{Cov}(\boldsymbol{\epsilon}) = E[\mathbf{D}_{\mathbf{V}}] + \mu^2\,\mathrm{Cov}(\mathbf{V}).
\]
Assuming the operator matrices $\mathbf{K}_i$ are square and invertible, the unconditional covariance of $\mathbf{W}$ is
\begin{equation}
\boldsymbol{\Sigma}
=
\mathrm{Cov}(\mathbf{W})
=
\mathbf{K}(\boldsymbol{\theta})^{-1}
\Big(E[\mathbf{D}_{\mathbf{V}}] + \mu^2\,\mathrm{Cov}(\mathbf{V})\Big)
\mathbf{K}(\boldsymbol{\theta})^{-T}.
\label{eq:Sigma-uncond-general}
\end{equation}

Consistent with the tensor product construction, the mesh weights factorize as
\[
\mathrm{diag}(\mathbf{h})
=
\mathrm{diag}(\mathbf{h}_d \otimes \cdots \otimes \mathbf{h}_1)
=
\bigotimes_{i=d}^1 \mathrm{diag}(\mathbf{h}_i).
\]
In the common variance-mixture discretization based on independent cell-wise mixing variables, $E[\mathbf{D}_{\mathbf{V}}]\propto \mathrm{diag}(\mathbf{h})$, and moreover $\mathrm{Cov}(\mathbf{V})$ is diagonal with entries proportional to the same weights, i.e., $\mathrm{Cov}(\mathbf{V})\propto \mathrm{diag}(\mathbf{h})$. Hence
\[
E[\mathbf{D}_{\mathbf{V}}] + \mu^2\,\mathrm{Cov}(\mathbf{V})
\ \propto\ 
\mathrm{diag}(\mathbf{h})
=
\bigotimes_{i=d}^1 \mathrm{diag}(\mathbf{h}_i),
\]
and the unconditional covariance is \emph{covariance-separable} (second-order separable). Using the mixed-product property of Kronecker products, we obtain the explicit factorization
\begin{equation}
\boldsymbol{\Sigma} \propto \bigotimes_{i=d}^1
\left(
\mathbf{K}_i(\boldsymbol{\theta}_i)^{-1}\,
\mathrm{diag}(\mathbf{h}_i)\,
\mathbf{K}_i(\boldsymbol{\theta}_i)^{-T}
\right).
\label{eq:Sigma-separable}
\end{equation}
Thus, while the conditional precision \(\mathbf{Q}(\mathbf{V},\boldsymbol{\theta})\) is generally non-separable, the tensor product model is separable in the standard \emph{second-order} sense even under variance-mixture non-Gaussianity.

In the specific case of homogeneous Gaussian innovations where $\mathbf{D}_\mathbf{V}=\mathbf{I}$ and $\mathbf{m}(\mathbf{V})\equiv\mathbf{0}$, the precision simplifies explicitly to the separable form
\[
\mathbf{Q}(\boldsymbol{\theta})
=
\mathbf{K}(\boldsymbol{\theta})^\top \mathbf{K}(\boldsymbol{\theta})
=
\bigotimes_{i=d}^1
\big(\mathbf{K}_i(\boldsymbol{\theta}_i)^\top \mathbf{K}_i(\boldsymbol{\theta}_i)\big).
\]

A particularly important example of this approach is the separable space--time model ($d=2$). Let $\mathbf{W}(s,t)$ be a latent field discretized so that the stacked vector $\mathbf{W}$ has dimension $n=n_sn_t$. Letting $\mathbf{K}_s(\boldsymbol{\theta}_s)\in\mathbb R^{n_s\times n_s}$ and $\mathbf{K}_t(\boldsymbol{\theta}_t)\in\mathbb R^{n_t\times n_t}$ denote the sparse spatial and temporal Markov operator matrices respectively, the separable space--time latent operator matrix follows directly from Equation~\ref{eq:K-kronecker} as:
\begin{equation}
\mathbf{K}_{\mathrm{st}}(\boldsymbol{\theta})
=
\mathbf{K}_t(\boldsymbol{\theta}_t)\ \otimes\ \mathbf{K}_s(\boldsymbol{\theta}_s).
\label{eq:Kst-kronecker}
\end{equation}

Following the general framework, the combined space--time mesh weight column vector is $\mathbf{h}_{\mathrm{st}} = \mathbf{h}_t \otimes \mathbf{h}_s$. While the mixing vector $\mathbf{V}$ does not factorize as $\mathbf{V}_t \otimes \mathbf{V}_s$, its first- and second-order moments inherit the tensor product weights under the independent-increments construction, so that
\[
E[\mathbf{D}_{\mathbf{V}}] + \mu^2\,\mathrm{Cov}(\mathbf{V})
\ \propto\ 
\mathrm{diag}(\mathbf{h}_{\mathrm{st}})
=
\mathrm{diag}(\mathbf{h}_t \otimes \mathbf{h}_s)
=
\mathrm{diag}(\mathbf{h}_t)\otimes \mathrm{diag}(\mathbf{h}_s),
\]
and the unconditional covariance separates into temporal and spatial components as in \eqref{eq:Sigma-separable}. For homogeneous Gaussian innovations ($\mathbf{D}_\mathbf{V}=\mathbf{I}$), the conditional precision matrix explicitly reduces to the separable space--time structure:
\begin{equation}
\mathbf{Q}_{\mathrm{st}}(\boldsymbol{\theta})
=
\big(\mathbf{K}_t(\boldsymbol{\theta}_t)^\top \mathbf{K}_t(\boldsymbol{\theta}_t)\big)\ \otimes\ \big(\mathbf{K}_s(\boldsymbol{\theta}_s)^\top \mathbf{K}_s(\boldsymbol{\theta}_s)\big).
\label{eq:Qst-from-Kst}
\end{equation}

\subsection{Non-separable space--time models via SPDEs}
\label{subsec:manymodels-nonseparable}

Separable space--time constructions (e.g., $\mathbf{K}_t\otimes \mathbf{K}_s$) are often effective baselines, but they
cannot represent genuine space--time interaction, such as transport or directionality, where temporal
evolution depends on spatial gradients and the resulting covariance is typically \emph{non-separable}
(and may be asymmetric in time and/or space).
A principled operator-first route to such interaction is to start from an \emph{evolution SPDE} that
encodes physically interpretable mechanisms and then discretize it into a sparse GMRF on the
space--time grid; see, e.g., the advection--diffusion SPDE construction in \cite{advection_diffusion}.

Let $X(\mathbf{s},t)$ be a spatio-temporal field on $\mathbb{R}^d\times\mathbb{R}$.
A flexible class of non-separable models is obtained from the first-order-in-time advection--diffusion
SPDE
\begin{equation}
\label{eq:advection-diffusion-spde}
\Bigg[
\frac{\partial}{\partial t}
+\frac{1}{c}\big(\kappa^2-\nabla\cdot \mathbf{H}\nabla\big)
+\frac{1}{c}\,\boldsymbol{\gamma}^\top \nabla
\Bigg] X(\mathbf{s},t)
=
\frac{\tau}{\sqrt c}\,Z(\mathbf{s},t),
\end{equation}
where $c>0$ is a temporal scaling parameter, $\kappa>0$ controls the spatial range, $\mathbf{H}$ is a
diffusion/anisotropy matrix, $\boldsymbol{\gamma}$ is an advection (transport) vector, and $Z(\mathbf{s},t)$ is a
spatio-temporal forcing noise.

A key point for LLnGM-style computation is that \eqref{eq:advection-diffusion-spde} admits a sparse GMRF
approximation after standard operator discretizations: (i) an implicit Euler finite difference scheme for the time derivative, and (ii) a finite element discretization of the spatial operators at each time step.
This yields a sparse block-lower-bidiagonal operator matrix $\mathbf{K}(\boldsymbol{\theta})$ acting on the stacked latent state $\mathbf{W} = (\mathbf{W}_1^\top, \ldots, \mathbf{W}_T^\top)^\top$:

\begin{equation}
\mathbf{K}_{\mathrm{adv-diff}}(\boldsymbol{\theta}) = 
\begin{pmatrix}
\mathbf{K}_{1,1} & \mathbf{0} & \mathbf{0} & \cdots & \mathbf{0} \\
\mathbf{K}_{2,1} & \mathbf{K}_{2,2} & \mathbf{0} & \cdots & \mathbf{0} \\
\mathbf{0} & \mathbf{K}_{3,2} & \mathbf{K}_{3,3} & \cdots & \mathbf{0} \\
\vdots & \vdots & \ddots & \ddots & \vdots \\
\mathbf{0} & \mathbf{0} & \cdots & \mathbf{K}_{T,T-1} & \mathbf{K}_{T,T}
\end{pmatrix}.
\label{eq:spacetime-K-matrix}
\end{equation}

For $t=2,\dots,T$, the blocks are defined by the implicit Euler update:
\begin{align*}
\mathbf{K}_{t,t} &= \sqrt{c}\,\mathbf{C} + \frac{1}{\sqrt{c}}\,\mathbf{L}_t, \\
\mathbf{K}_{t,t-1} &= -\sqrt{c}\,\mathbf{C}.
\end{align*}
Here, $\mathbf{C}$ is the spatial mass matrix ($C_{ij} = \langle \psi_i, \psi_j \rangle$), and $\mathbf{L}_t$ represents the discretized spatial operator (including diffusion, advection, and stabilization) at time $t$:
\[
\mathbf{L}_t = \kappa^2 \mathbf{C} + \mathbf{G} + \mathbf{B}(\boldsymbol{\gamma}_t) + \mathbf{S}(\boldsymbol{\gamma}_t),
\]
where $\mathbf{G}$ is the stiffness matrix, $\mathbf{B}(\boldsymbol{\gamma}_t)$ is the advection matrix, and $\mathbf{S}(\boldsymbol{\gamma}_t)$ is the streamline diffusion stabilization matrix.
This structure allows for computationally efficient inference using sparse matrix algorithms despite the non-separable nature of the model.
\change{The construction of $\mathbf{h}$ is same as in the separable space-time model case.}

\begin{table}[htb]
\centering
\caption{Examples of Latent Dependence Operators $\K(\theta)$ in the LLnGM Framework.
All Constructions Yield Sparse Operators after Discretization and Admit Scalable Conditional Gaussian Inference.
}
\label{tab:operator-matrices}
\begin{adjustbox}{max width=0.90\textwidth}
\begin{tabular}{>{\raggedright\arraybackslash}m{3.3cm}
                >{\centering\arraybackslash}m{5.2cm}
                >{\raggedright\arraybackslash}m{6.2cm}}
\toprule
\textbf{Construction type} & \textbf{Operator $\K(\theta)$} & \textbf{Interpretation} \\
\midrule

\textbf{i.i.d. latent field}
& $\mathbf{I}_n$
& Independent latent effects; no structured dependence. \\

\midrule

\textbf{Temporal Markov models}
& $\K_t(\theta_t)$ (banded)
& Linear difference or recursion operators (AR($p$), RW($k$), OU) inducing Markov dependence in time. \\

\midrule

\textbf{Spatial SPDE (Mat\'ern)}
& $\kappa^2 \C + \G$
& Finite element discretization of $(\kappa^2 - \Delta)^{\alpha/2} w = \mathcal W$ \citep{lindgren2011}. \\

\midrule

\textbf{Replicated fields}
& $\mathbf{I}_R \otimes \K_0(\theta)$
& $R$ independent replicates sharing the same latent operator $\K_0$. \\

\midrule

\textbf{Multivariate / coupled fields}
& $(\mathbf{D}(\boldsymbol{\psi})\otimes I)\,
   \mathrm{blockdiag}(\K_1,\ldots,\K_q)$
& Operator-level mixing across components; $\K_j$ control marginal dependence and $D(\psi)$ cross-correlation \citep{multivariate_matern}. \\

\midrule

\textbf{Separable space--time}
& $\K_t(\theta_t)\otimes \K_s(\theta_s)$
& Kronecker construction combining temporal and spatial operators; yields separable space--time dependence. \\

\midrule

\textbf{Non-separable space--time}
& $\K_{\mathrm{adv-diff}}(\theta)$
& Discretized evolution SPDEs (e.g., advection--diffusion) coupling space and time through differential operators \citep{advection_diffusion}. \\



\bottomrule
\end{tabular}
\end{adjustbox}
\end{table}


\section{Details of the inference algorithm}\label{app:inference-details}

Here we give the specific forms of the conditional distributions used in the Gibbs sampling algorithm in Algorithm~\ref{algo:gibbs}.

\subsection{The conditional distributions used in the Gibbs sampling}

\begin{proposition}[Conditional distribution of \(\mathbf{W}|\mathbf{V}, \mathbf{Y}\)]
\label{prp:conditional-w}
Under the LLnGM model \eqref{eq:framework}, the conditional distribution of \(\mathbf{W}|\mathbf{V}, \mathbf{Y}\) is given by
  $\mathbf{W}|\mathbf{V}, \mathbf{Y} \sim \mathcal{N}(\boldsymbol{\mu}_W, \boldsymbol{\Sigma}_W)$,
where
$\boldsymbol{\Sigma}_W^{-1} = \mathbf{K}^{\top} \mathbf{D}_{\mathbf{V}^W}^{-1} \mathbf{K} + \mathbf{A}^{\top} \mathbf{D}_{\mathbf{V}^Y}^{-1} \mathbf{A}$, 
$$
\boldsymbol{\mu}_W = \boldsymbol{\Sigma}_W \left( \mathbf{K}^{\top} \mathbf{D}_{\mathbf{V}^W}^{-1} \boldsymbol{\mu}^W \odot (\mathbf{V}^W - \mathbf{h}) + \mathbf{A}^{\top} \mathbf{D}_{\mathbf{V}^Y}^{-1} (\mathbf{Y} - \mathbf{X}\boldsymbol{\beta} - \boldsymbol{\mu}^Y \odot (\mathbf{V}^Y - \mathbf{1})) \right),
$$
and \(\mathbf{D}_{\mathbf{V}^W} = \text{diag}(\boldsymbol{\sigma}^W \odot \boldsymbol{\sigma}^W \odot \mathbf{V}^W)\), \(\mathbf{D}_{\mathbf{V}^Y} = \text{diag}(\boldsymbol{\sigma}^Y \odot \boldsymbol{\sigma}^Y \odot \mathbf{V}^Y)\).
\end{proposition}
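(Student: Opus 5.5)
The plan is to use Bayes' rule, $\pi(\mathbf{W}\mid\mathbf{V},\mathbf{Y}) \propto \pi(\mathbf{Y}\mid\mathbf{W},\mathbf{V})\,\pi(\mathbf{W}\mid\mathbf{V})$, with $\boldsymbol{\theta}$ (including $\boldsymbol{\beta}$) held fixed. Both factors are Gaussian in $\mathbf{W}$, so we complete the square in the exponent. First we need the conditional independence structure. Since $\mathbf{V}^Y$ and $\mathbf{V}^W$ are independent, and the two noises are independent given $(\mathbf{V}^Y,\mathbf{V}^W)$, the law of $\mathbf{W}$ given $\mathbf{V}$ depends only on $\mathbf{V}^W$. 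Likewise, the law of $\mathbf{Y}$ given $(\mathbf{W},\mathbf{V})$ depends only on $\mathbf{V}^Y$ and $\mathbf{W}$.

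\textbf{Prior factor.} From \eqref{eq:process}--\eqref{eq:noise}, $\mathbf{K}\mathbf{W}\mid\mathbf{V}^W \sim \mathcal{N}\big(\boldsymbol{\mu}^W\odot(\mathbf{V}^W-\mathbf{h}),\,\mathbf{D}_{\mathbf{V}^W}\big)$. Assuming $\mathbf{K}$ is square and invertible (as implicitly assumed throughout, e.g.\ in \eqref{eq:Sigma-uncond-general}), the change of variables $\mathbf{W}=\mathbf{K}^{-1}\boldsymbol{\epsilon}^W$ gives a density proportional to
\[
\exp\Big\{-\tfrac12\big(\mathbf{K}\mathbf{W}-\mathbf{m}_W\big)^\top \mathbf{D}_{\mathbf{V}^W}^{-1}\big(\mathbf{K}\mathbf{W}-\mathbf{m}_W\big)\Big\},
\qquad \mathbf{m}_W=\boldsymbol{\mu}^W\odot(\mathbf{V}^W-\mathbf{h}).
\]
The Jacobian $|\det\mathbf{K}|$ does not depend on $\mathbf{W}$, so it can be ignored.

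\textbf{Likelihood factor.} From \eqref{eq:data} with the noise representation for $\star=\mathbf{Y}$, $\mathbf{Y}\mid\mathbf{W},\mathbf{V}^Y\sim\mathcal{N}\big(\mathbf{A}\mathbf{W}+\mathbf{X}\boldsymbol{\beta}+\mathbf{m}_Y,\,\mathbf{D}_{\mathbf{V}^Y}\big)$, where $\mathbf{m}_Y=\boldsymbol{\mu}^Y\odot(\mathbf{V}^Y-\mathbf{1})$. Here we use $\mathbf{h}^Y=\mathbf{1}$, which holds because the measurement noise is indexed by observations, as for discrete models. Writing $\tilde{\mathbf{Y}}=\mathbf{Y}-\mathbf{X}\boldsymbol{\beta}-\mathbf{m}_Y$, this factor is proportional to $\exp\{-\tfrac12(\tilde{\mathbf{Y}}-\mathbf{A}\mathbf{W})^\top\mathbf{D}_{\mathbf{V}^Y}^{-1}(\tilde{\mathbf{Y}}-\mathbf{A}\mathbf{W})\}$.

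\textbf{Completing the square.} Expanding both quadratic forms and keeping only the terms that involve $\mathbf{W}$, the log-density equals
\[
-\tfrac12\,\mathbf{W}^\top\big(\mathbf{K}^\top\mathbf{D}_{\mathbf{V}^W}^{-1}\mathbf{K}+\mathbf{A}^\top\mathbf{D}_{\mathbf{V}^Y}^{-1}\mathbf{A}\big)\mathbf{W}
+\mathbf{W}^\top\big(\mathbf{K}^\top\mathbf{D}_{\mathbf{V}^W}^{-1}\mathbf{m}_W+\mathbf{A}^\top\mathbf{D}_{\mathbf{V}^Y}^{-1}\tilde{\mathbf{Y}}\big)+\text{const}.
\]
We identify the quadratic coefficient matrix as $\boldsymbol{\Sigma}_W^{-1}$ and the linear coefficient as $\boldsymbol{\Sigma}_W^{-1}\boldsymbol{\mu}_W$. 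This gives exactly the stated formulas.

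Two facts are needed to conclude that the law is Gaussian. First, $\boldsymbol{\Sigma}_W^{-1}$ is positive definite, since $\mathbf{D}_{\mathbf{V}^W}$ has positive diagonal (because $\mathbf{V}^W>0$ a.s.) and $\mathbf{K}$ is invertible. Second, a density proportional to the exponential of such a quadratic form is the $\mathcal{N}(\boldsymbol{\mu}_W,\boldsymbol{\Sigma}_W)$ density.

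\textbf{Main obstacle.} The only delicate point is bookkeeping rather than analysis. One has to justify the change of variables, which relies on invertibility of $\mathbf{K}$. One also has to handle cases such as RW operators where $\mathbf{K}$ is not square. In those cases the prior on $\mathbf{W}$ is intrinsic (improper), but the same formal computation still holds provided $\boldsymbol{\Sigma}_W^{-1}$ is positive definite, i.e.\ $\ker\mathbf{K}\cap\ker\mathbf{A}=\{0\}$. I would state this as a remark.
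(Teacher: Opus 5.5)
Your proposal is correct and follows essentially the same route as the paper. You multiply the Gaussian kernel of $\mathbf{Y}\mid\mathbf{W},\mathbf{V}^Y$ by the kernel of $\mathbf{K}\mathbf{W}\mid\mathbf{V}^W$ and complete the square to read off $\boldsymbol{\Sigma}_W^{-1}$ and $\boldsymbol{\mu}_W$. Your added checks (the $\mathbf{W}$-independent Jacobian $|\det\mathbf{K}|$, positive definiteness of $\boldsymbol{\Sigma}_W^{-1}$, and the condition $\ker\mathbf{K}\cap\ker\mathbf{A}=\{0\}$ for non-square intrinsic operators) are details the paper's proof leaves implicit, not a change of approach.
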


\begin{proof}
Condition on $(\mathbf V^Y,\mathbf V^W,\mathbf Y)$ and treat $(\boldsymbol\theta,\boldsymbol\beta)$ as fixed.
From \eqref{eq:data}--\eqref{eq:noise}, the observation equation implies
\begin{equation}
\mathbf Y \mid \mathbf W,\mathbf V^Y
\sim \mathcal N\!\Big(\mathbf A\mathbf W+\mathbf X\boldsymbol\beta+\mathbf m^{Y},\ \mathbf D_{\mathbf V^Y}\Big),
\label{eq:lik_y_given_w}
\end{equation}
where
$\mathbf m^{Y}=\boldsymbol\mu^{Y}\odot(\mathbf V^{Y}-\mathbf 1)$, and 
$\mathbf D_{\mathbf V^Y}=\text{diag}\!\big((\boldsymbol\sigma^{Y}\odot\boldsymbol\sigma^{Y})\odot \mathbf V^{Y}\big)$.
Similarly, the process equation \eqref{eq:process} and the noise specification \eqref{eq:noise} give
\[
\mathbf K\mathbf W = \boldsymbol\epsilon^{(2)},\qquad
\boldsymbol\epsilon^{(2)}\mid \mathbf V^W \sim \mathcal N(\mathbf m^{W},\mathbf D_{\mathbf V^W}),
\]
with
$\mathbf m^{W}=\boldsymbol\mu^{W}\odot(\mathbf V^{W}-\mathbf h)$, and 
$\mathbf D_{\mathbf V^W}=\text{diag}\!\big((\boldsymbol\sigma^{W}\odot\boldsymbol\sigma^{W})\odot \mathbf V^{W}\big)$.
Therefore, by linear transformation of a Gaussian random vector,
\begin{equation}
\mathbf K\mathbf W \mid \mathbf V^W \sim \mathcal N(\mathbf m^{W},\mathbf D_{\mathbf V^W}),
\label{eq:lik_kw}
\end{equation}
equivalently,
\begin{equation}
\mathbf K\mathbf W-\mathbf m^{W} \mid \mathbf V^W \sim \mathcal N(\mathbf 0,\mathbf D_{\mathbf V^W}).
\label{eq:lik_kw_centered}
\end{equation}

Since $\mathbf Y$ depends on $\mathbf W$ only through \eqref{eq:lik_y_given_w}, and the process model
contributes the Gaussian term \eqref{eq:lik_kw_centered}, the conditional density of $\mathbf W$ given
$(\mathbf V,\mathbf Y)$ is proportional to the product of these two Gaussian kernels:
\begin{align}
p(\mathbf W\mid \mathbf V,\mathbf Y)
&\propto
\exp\!\left\{
-\tfrac12(\mathbf Y-\mathbf A\mathbf W-\mathbf X\boldsymbol\beta-\mathbf m^{Y})^\top
\mathbf D_{\mathbf V^Y}^{-1}
(\mathbf Y-\mathbf A\mathbf W-\mathbf X\boldsymbol\beta-\mathbf m^{Y})
\right\}
\nonumber\\
&\quad\times
\exp\!\left\{
-\tfrac12(\mathbf K\mathbf W-\mathbf m^{W})^\top
\mathbf D_{\mathbf V^W}^{-1}
(\mathbf K\mathbf W-\mathbf m^{W})
\right\}.
\label{eq:post_kernel}
\end{align}
Taking $-2\log$ of \eqref{eq:post_kernel} and expanding the quadratic forms, then collecting the
terms that depend on $\mathbf W$, yields
\begin{align*}
-2\log p(\mathbf W\mid \mathbf V,\mathbf Y)
&=
\mathbf W^\top\!\Big(\mathbf A^\top \mathbf D_{\mathbf V^Y}^{-1}\mathbf A
+ \mathbf K^\top \mathbf D_{\mathbf V^W}^{-1}\mathbf K\Big)\mathbf W \\
&\quad
-2\,\mathbf W^\top\!\Big(
\mathbf A^\top \mathbf D_{\mathbf V^Y}^{-1}(\mathbf Y-\mathbf X\boldsymbol\beta-\mathbf m^{Y})
+\mathbf K^\top \mathbf D_{\mathbf V^W}^{-1}\mathbf m^{W}
\Big)
+\text{const},
\end{align*}
where $\text{const}$ does not depend on $\mathbf W$. This is the canonical quadratic form of a multivariate
normal density. Hence,
$\mathbf W\mid \mathbf V,\mathbf Y \sim \mathcal N(\boldsymbol\mu_W,\boldsymbol\Sigma_W)$,
with precision matrix
$\boldsymbol\Sigma_W^{-1}
=\mathbf K^\top \mathbf D_{\mathbf V^W}^{-1}\mathbf K
+\mathbf A^\top \mathbf D_{\mathbf V^Y}^{-1}\mathbf A$,
and mean vector
$\boldsymbol\mu_W
=\boldsymbol\Sigma_W\left(
\mathbf K^\top \mathbf D_{\mathbf V^W}^{-1}\mathbf m^{W}
+\mathbf A^\top \mathbf D_{\mathbf V^Y}^{-1}(\mathbf Y-\mathbf X\boldsymbol\beta-\mathbf m^{Y})
\right)$.
The stated expression is obtained by substituting $\mathbf m^{W}=\boldsymbol\mu^{W}\odot(\mathbf V^{W}-\mathbf h)$ and
$\mathbf m^{Y}=\boldsymbol\mu^{Y}\odot(\mathbf V^{Y}-\mathbf 1)$.
\end{proof}


\begin{proposition}[Conditional distributions of mixing variables]
\label{prp:conditional-v}
Under the LLnGM model, the conditional distributions of the mixing variables are:
\begin{align*}
V^W_i | \mathbf{W}, \mathbf{Y} &\sim \text{GIG}\left(p_i^W - \frac{1}{2}, \quad a_i^W + \frac{(\mu^W_i)^2}{(\sigma^W_i)^2}, \quad b_i^W + \frac{((\mathbf{K}\mathbf{W})_i + \mu^W_i h_i^W)^2}{(\sigma^W_i)^2}\right), \\
V^Y_i | \mathbf{W}, \mathbf{Y} &\sim \text{GIG}\left(p_i^Y - \frac{1}{2}, \quad a_i^Y + \frac{(\mu^Y_i)^2}{(\sigma^Y_i)^2}, \quad b_i^Y + \frac{(Y_i - (\mathbf{X}\boldsymbol{\beta} + \mathbf{A}\mathbf{W})_i + \mu^Y_i)^2}{(\sigma^Y_i)^2}\right).
\end{align*}
\end{proposition}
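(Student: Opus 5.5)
The plan is a conjugacy computation: show that, as a function of each $V_i$, the conditional density is a GIG kernel. By the hierarchy \eqref{eq:framework}, the joint density factorizes as
\[
\pi(\mathbf{Y}\mid\mathbf{W},\mathbf{V}^Y)\,\pi(\mathbf{W}\mid\mathbf{V}^W)\,\pi(\mathbf{V}^W)\,\pi(\mathbf{V}^Y).
\]
The first factor does not involve $\mathbf{V}^W$, and the second does not involve $\mathbf{V}^Y$. So $p(\mathbf{V}^W\mid\mathbf{W},\mathbf{Y}) \propto \pi(\mathbf{W}\mid\mathbf{V}^W)\,\pi(\mathbf{V}^W)$, and analogously for $\mathbf{V}^Y$. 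In particular, $\mathbf{V}^W$ and $\mathbf{V}^Y$ are conditionally independent given $(\mathbf{W},\mathbf{Y})$.

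For the process part, change variables via $\boldsymbol{\epsilon}^{\mathbf{W}}=\mathbf{K}\mathbf{W}$. Since $\mathbf{K}$ does not depend on $\mathbf{V}$, the Jacobian is a constant in $\mathbf{V}^W$ and drops out. Equivalently, one can use \eqref{eq:lik_kw} from the proof of Proposition~\ref{prp:conditional-w}. The covariance $\mathbf{D}_{\mathbf{V}^W}$ is diagonal and the $V_i^W$ are independent a priori. Hence the conditional density factorizes over $i$, and each factor is
\[
p(V_i^W\mid\cdot)\propto (V_i^W)^{-1/2}\exp\Big\{-\frac{(\epsilon_i-\mu_i V_i^W+\mu_i h_i)^2}{2\sigma_i^2 V_i^W}\Big\}\,(V_i^W)^{p_i-1}\exp\Big\{-\tfrac12\big(a_iV_i^W+b_i/V_i^W\big)\Big\},
\]
where $\epsilon_i=(\mathbf{K}\mathbf{W})_i$.

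The key step is to expand the square with $c_i=\epsilon_i+\mu_ih_i$:
\[
\frac{(c_i-\mu_iV)^2}{\sigma_i^2V}=\frac{c_i^2}{\sigma_i^2}\,\frac{1}{V}-\frac{2c_i\mu_i}{\sigma_i^2}+\frac{\mu_i^2}{\sigma_i^2}\,V .
\]
The middle term is constant in $V$. Collecting powers gives the kernel
\[
V^{p_i-1/2-1}\exp\Big\{-\tfrac12\big[(a_i+\mu_i^2/\sigma_i^2)V+(b_i+c_i^2/\sigma_i^2)/V\big]\Big\},
\]
which is the $\text{GIG}(p_i-\tfrac12,\ a_i+\mu_i^2/\sigma_i^2,\ b_i+(\epsilon_i+\mu_ih_i)^2/\sigma_i^2)$ density up to normalization. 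This matches the first claim.

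The $\mathbf{V}^Y$ case is identical, with $\epsilon_i^Y=Y_i-(\mathbf{X}\boldsymbol\beta+\mathbf{A}\mathbf{W})_i$ and $h=\mathbf{1}$ (the convention used in Proposition~\ref{prp:conditional-w}). This gives $c_i=\epsilon^Y_i+\mu^Y_i$.

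The main point needing care is not the algebra but properness. The updated parameters must lie in the admissible GIG parameter region so that the kernel is normalizable. The plan is to note that the updated $a$ and $b$ are the prior ones plus nonnegative terms. Normalizability then follows from a direct check of the integrand: it is integrable near $0$ when the updated $b$ is positive, or when $p-\tfrac12>0$ if $b=0$ as in the GAL case, and near $\infty$ analogously via $a$. Edge cases of measure zero, for instance $c_i=0$ with $b_i=0$, may need a remark.
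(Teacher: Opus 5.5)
Your proposal is correct and follows the paper's own argument. You isolate the factor of the joint density that involves $V_i$: a univariate Gaussian kernel in $(\mathbf{K}\mathbf{W})_i$, or in $Y_i-(\mathbf{X}\boldsymbol{\beta}+\mathbf{A}\mathbf{W})_i$ with $h_i=1$, times the GIG prior. You then expand the square around $c_i$ and read off the GIG kernel with parameters $(p_i-\tfrac12,\ a_i+\mu_i^2/\sigma_i^2,\ b_i+c_i^2/\sigma_i^2)$, exactly as the paper does. The paper leaves implicit your factorization and constant-Jacobian step (conditional independence across $i$ and between $\mathbf{V}^W$ and $\mathbf{V}^Y$) and your normalizability check, which matters for GAL with $b_i=0$ off the null event $c_i=0$; your $\sigma_i^2$ also matches the statement, whereas the paper's written proof has $\sigma_i^W$ where $(\sigma_i^W)^2$ is meant.
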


\begin{proof}
We give the proof for $V_i^W$; the case $V_i^Y$ follows by the same argument.

Fix $i$ and write $r_i^W := (\mathbf K\mathbf W)_i$. Conditional on $V_i^W=v$, the process noise model
\eqref{eq:noise} implies
\[
r_i^W=\epsilon^{(2)}_i,\qquad 
\epsilon^{(2)}_i\mid v \sim \mathcal N\!\big(\mu_i^W(v-h_i),\ \sigma_i^W v\big),
\]
and the prior is $v\sim\text{GIG}(p_i^W,a_i^W,b_i^W)$. Hence, up to a multiplicative constant independent of $v$,
\[
p(v\mid \mathbf W,\mathbf Y)
\ \propto\
(\sigma_i^W v)^{-1/2}
\exp\!\left\{-\frac{\big(r_i^W-\mu_i^W(v-h_i)\big)^2}{2\sigma_i^W v}\right\}
\cdot
v^{p_i^W-1}\exp\!\left\{-\frac12\Big(a_i^W v+b_i^W/v\Big)\right\}.
\]
Therefore
\[
p(v\mid \mathbf W,\mathbf Y)
\ \propto\
v^{p_i^W-\frac32}
\exp\!\left\{
-\frac12\Big(a_i^W v+b_i^W/v\Big)
-\frac{\big(r_i^W-\mu_i^W(v-h_i)\big)^2}{2\sigma_i^W v}
\right\}.
\]
Expanding the quadratic term,
$r_i^W-\mu_i^W(v-h_i)=(r_i^W+\mu_i^W h_i)-\mu_i^W v$,
so that
\[
\frac{\big(r_i^W-\mu_i^W(v-h_i)\big)^2}{v}
=
\frac{(r_i^W+\mu_i^W h_i)^2}{v}
-2\mu_i^W(r_i^W+\mu_i^W h_i)
+(\mu_i^W)^2 v.
\]
The middle term does not depend on $v$ and is absorbed into the normalizing constant. Thus
\[
p(v\mid \mathbf W,\mathbf Y)
\ \propto\
v^{p_i^W-\frac32}
\exp\!\left\{
-\frac12\left(
\Big(a_i^W+\frac{(\mu_i^W)^2}{\sigma_i^W}\Big)v
+
\Big(b_i^W+\frac{(r_i^W+\mu_i^W h_i)^2}{\sigma_i^W}\Big)\frac1v
\right)
\right\},
\]
which is the kernel of a $\text{GIG}\!\left(p_i^W-\frac12,\ a_i^W+\frac{(\mu_i^W)^2}{\sigma_i^W},\
b_i^W+\frac{(r_i^W+\mu_i^W h_i)^2}{\sigma_i^W}\right)$ distribution. Hence
\[
V_i^W\mid \mathbf W,\mathbf Y \sim 
\text{GIG}\!\left(p_i^W-\frac12,\ a_i^W+\frac{(\mu_i^W)^2}{\sigma_i^W},\
b_i^W+\frac{\big((\mathbf K\mathbf W)_i+\mu_i^W h_i\big)^2}{\sigma_i^W}\right).
\]

For $V_i^Y$, let $r_i^Y:=Y_i-(\mathbf X\boldsymbol\beta+\mathbf A\mathbf W)_i$. Using
\[
r_i^Y=\epsilon^{(1)}_i,\qquad \epsilon^{(1)}_i\mid v \sim \mathcal N\!\big(\mu_i^Y(v-1),\ \sigma_i^Y v\big),
\qquad v\sim\text{GIG}(p_i^Y,a_i^Y,b_i^Y),
\]
the same calculation yields
\begin{align*}
V_i^Y\mid \mathbf W,\mathbf Y &\sim 
\text{GIG}\!\left(p_i^Y-\frac12,\ a_i^Y+\frac{(\mu_i^Y)^2}{\sigma_i^Y},\
b_i^Y+\frac{\big(r_i^Y+\mu_i^Y\big)^2}{\sigma_i^Y}\right) \\
&=
\text{GIG}\!\left(p_i^Y-\frac12,\ a_i^Y+\frac{\big(Y_i-(\mathbf X\boldsymbol\beta+\mathbf A\mathbf W)_i+\mu_i^Y\big)^2}{\sigma_i^Y},\
b_i^Y+\frac{(\mu_i^Y)^2}{\sigma_i^Y}\right).
\end{align*}
This proves the claimed conditional forms.
\end{proof}

\subsection{The formulation of the gradient estimator}\label{sec:gradient}

In this section, we will derive the gradient estimator of the joint log-likelihood function for the LLnGM model in Section~\ref{chp:framework}, utilizing Fisher's identity \citep[Proposition D.4]{douc2013nonlinear}.

By the discussion in \cite{wallin2015}, for \(V_i\) to have a known parametric distribution for any \(h_i\), we need that the variance process to belong to a class of distributions that is closed under convolution, and there are only two special cases of the GH distribution that has this property \citep{podgorski2016}, which is NIG and GAL distributions.
For both the NIG and the GAL distributions, the formulation is over-parameterized. Therefore, we will adapt the parameterization in \cite{asar2020}, i.e., consider \(V_i \sim \text{Inv-Gaussian}(\nu, \nu h_i^2)\) for NIG driven noise and \(V_i \sim \text{Gamma}(h_i \nu, \nu)\) for GAL driven noise, \(\nu > 0\) is the parameter of the variance component distribution.

Let $\mathbf{Y} = (y_1, y_2, \dots, y_m)^\top$, $\mathbf{W} = (w_1, w_2, \dots, w_n)^\top$, and $\bs{\theta} = (\theta_1, \theta_2, \dots, \theta_p)^\top$ denote the vector of kernel parameters. The joint log-likelihood function is given by
\[
    \log \pi(\mathbf{Y}, \mathbf{W}, \mathbf{V}) = \log \pi(\mathbf{Y} | \mathbf{W}) + \log \pi(\mathbf{W} | \mathbf{V}) + \log \pi(\mathbf{V}),
\]
where the components are:
\begin{align*}
    \log \pi(\mathbf{Y}| \mathbf{W}) &= -\frac{m}{2} \log(2\pi) - m \log \sigma_{\epsilon} - \frac{1}{2\sigma_{\epsilon}^2} \|\mathbf{Y} - \mathbf{X} \bs{\beta} - \mathbf{A} \mathbf{W} \|^2, \\
    \log \pi(\mathbf{W} | \mathbf{V}) &= -\frac{n}{2} \log(2\pi) - n \log \sigma + \log \det(\mathbf{K}) - \frac{1}{2} \log \det(\mathbf{D}_{\mathbf{V}}) \\
    & \quad - \frac{1}{2\sigma^2} (\mathbf{K}\mathbf{W} - \mu (\mathbf{V}-\mathbf{h}))^\top \mathbf{D}_{\mathbf{V}}^{-1} (\mathbf{K}\mathbf{W} - \mu (\mathbf{V}-\mathbf{h})),
\end{align*}
with $\mathbf{D}_{\mathbf{V}} = \text{diag}(\mathbf{V})$. The term $\log \pi(\mathbf{V})$ depends on the mixing distribution. For the case  
 $V_i \sim \text{Inv-Gaussian}(\nu, \nu h_i^2)$:
        \[
          \log \pi(\mathbf{V}) = \sum_{i=1}^n \left( \frac{1}{2} \log \left( \frac{\nu h_i^2}{2 \pi V_i^3} \right) - \frac{\nu V_i^2 + \nu h_i^2}{2 V_i} + \nu h_i \right).
        \]
    For $V_i \sim \text{Gamma}(h_i \nu, \nu)$:
        \[
          \log \pi(\mathbf{V}) = \sum_{i=1}^n \left( \log \left( \frac{\exp(-V_i \nu) (V_i \nu)^{h_i \nu} }{V_i \Gamma(h_i \nu)} \right) \right).
        \]

We now derive the partial derivatives of the joint log-likelihood,  $\mathcal{L} = \log \pi(\mathbf{Y}, \mathbf{W}, \mathbf{V})$ with respect to each parameter.

\paragraph*{Regression coefficients $\bs{\beta}$}
The gradient with respect to the regression coefficients is
\[
    \nabla_{\bs{\beta}} \mathcal{L} = \frac{1}{\sigma_{\epsilon}^2} \mathbf{X}^\top (\mathbf{Y} - \mathbf{X} \bs{\beta} - \mathbf{A} \mathbf{W}).
\]

\paragraph*{Noise variance $\sigma_{\epsilon}$}
Differentiating with respect to the observation noise variance yields
\[
    \partial_{\sigma_{\epsilon}} \mathcal{L} = -\frac{m}{\sigma_{\epsilon}} + \frac{1}{\sigma_{\epsilon}^3} \|\mathbf{Y} - \mathbf{X} \bs{\beta} - \mathbf{A} \mathbf{W} \|^2.
\]

\paragraph*{Process scale $\sigma$}
Expanding the quadratic term, we obtain:
\begin{align*}
    \partial_\sigma \mathcal{L} 
    & = -\frac{n}{\sigma} + \frac{1}{\sigma^3} \sum_{i=1}^n \frac{((\mathbf{K}\mathbf{W})_i - \mu (V_i - h_i))^2}{V_i} \\
    & = -\frac{n}{\sigma} + \frac{1}{\sigma^3} \sum_{i=1}^n \left( - 2\mu (\mathbf{K}\mathbf{W})_i - 2h_i \mu^2 + \mu^2 V_i + \frac{(\mathbf{K}\mathbf{W})_i^2}{V_i}  - \frac{2 h_i \mu (\mathbf{K} \mathbf{W})_i}{V_i} - \frac{h_i^2 \mu^2}{V_i} \right).
\end{align*}

\paragraph*{Skewness parameter $\mu$}
The gradient with respect to the skewness parameter is
\begin{align*}
    \partial_\mu \mathcal{L}
    & = \frac{1}{\sigma^2} \sum_{i=1}^n \frac{((\mathbf{K}\mathbf{W})_i - \mu (V_i - h_i))(V_i - h_i)}{V_i} \\ 
    & = \frac{1}{\sigma^2} \sum_{i=1}^n \left( (\mathbf{K}\mathbf{W})_i + 2h_i \mu - \mu V_i - \frac{h_i^2 \mu}{V_i} - \frac{h_i (\mathbf{K} \mathbf{W})_i}{V_i} \right).
\end{align*}

\paragraph*{Kernel parameters $\theta_i$}
For each kernel parameter $\theta_i$ in $\mathbf{K}(\bs{\theta})$, the gradient is given by
\[
    \partial_{\theta_i} \mathcal{L} = \text{tr}(\mathbf{K}^{-1} \partial_{\theta_i} \mathbf{K}) - \frac{1}{\sigma^2} \mathbf{W}^\top (\partial_{\theta_i} \mathbf{K})^\top \mathbf{D}_{\mathbf{V}}^{-1} (\mathbf{K}\mathbf{W} - \mu (\mathbf{V}-\mathbf{h})).
\]

\paragraph*{Mixing parameter $\nu$}
The gradient with respect to the mixing parameter depends on the choice of distribution:
\[ 
\partial_\nu \mathcal{L} = 
\begin{cases}
\sum_{i=1}^n \left( - \frac{h_i^2}{2 V_i} - \frac{V_i}{2} + \frac{1}{2\nu} + h_i \right) & \text{for NIG}\\
\sum_{i=1}^n \left( h_i - V_i + h_i \log(V_i) + h_i \log(\nu) - h_i \psi(h_i \nu) \right) & \text{for GAL}
\end{cases}
\]
where $\psi(\cdot)$ is the digamma function.

The MC gradient estimator follows by substituting unobserved \( \W \) and \(\V\) from the Gibbs samples. The RB gradient estimator follows by substituting \(\V\) from the Gibbs samples and \(\W\) from the conditional distribution of \(\W|\V, \Y\) in Proposition \ref{prp:conditional-w}.

\section{Comparative Study: \texttt{ngme2} vs. TMBStan and Stan}
\label{app:comparison}

We conducted a comparative study to evaluate the performance of \texttt{ngme2} against established Bayesian frameworks: Template Model Builder (TMB) integrated with Stan's NUTS sampler (\texttt{tmbstan}) and standard Stan MCMC. The comparison focused on an AR(1) process with Normal-Inverse Gaussian (NIG) noise to assess computational efficiency, convergence robustness, and parameter recovery.

We simulated $N=500$ observations from an AR(1) model with NIG innovations:
\begin{align}
W_t &= \rho W_{t-1} + \epsilon_t, \quad \epsilon_t \sim \text{NIG}(\mu, \sigma, \nu) \\
Y_t &= W_t + \eta_t, \quad \eta_t \sim \mathcal{N}(0, \sigma_\epsilon^2)
\end{align}
The true parameters were set to $\mu = 3$, $\sigma = 2$, $\rho = 0.8$, $\sigma_\epsilon = 1$, and $\nu = 0.4$. 
The NIG distribution is represented as a normal variance-mean mixture: $\epsilon_t \mid V_t \sim \mathcal{N}(\mu(V_t - 1), \sigma^2 V_t)$, where $V_t \sim \text{IG}(\nu, \nu)$.

The implementation details are as follows:

\begin{itemize}
    \item \textbf{\texttt{ngme2}}:  Rao-Blackwellized SGD optimization, which converged within 400 iterations, followed by posterior exploration using SGLD to draw 2,000 samples.
    \change{The traceplot of the MAP estimation using \texttt{ngme2} is shown in Figure~\ref{fig:traceplot-ngme2}.}
    \item \textbf{Stan MCMC}: A full hierarchical implementation in Stan, treating latent variables $\{W_t, V_t\}$ as parameters. We ran 4 chains with 2,000 iterations each (1,000 warmup).
    \change{The traceplot of the Stan MCMC is shown in Figure~\ref{fig:traceplot-stan-mcmc}.}
    \item \textbf{TMBStan}: A hybrid approach where TMB marginalized the latent variables $\{W_t, \log V_t\}$ via Laplace approximation. The resulting marginal log-posterior was then sampled using Stan's NUTS algorithm. 
    \change{The traceplot of the TMBStan is shown in Figure~\ref{fig:traceplot-tmbstan}.}
\end{itemize}

\begin{figure}[t]
\centering
\includegraphics[width=0.8\textwidth]{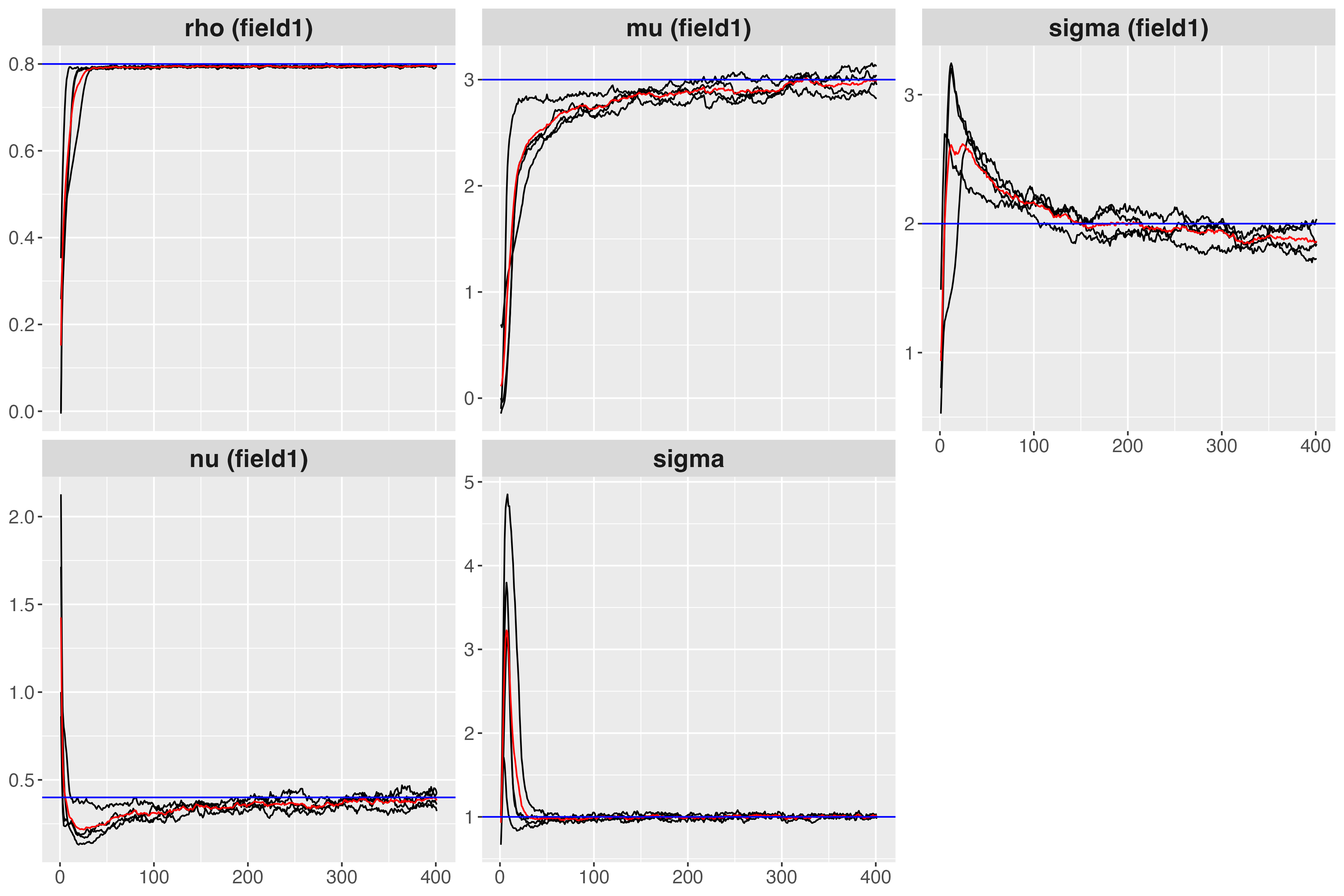}
\caption{Traceplots of the MAP Estimation for the Parameters in the NIG-AR(1) Model using \texttt{ngme2}.
The red lines are the average of 4 parallel runs. 
The blue lines indicate the true values.
}
\label{fig:traceplot-ngme2}
\end{figure}

\begin{figure}[t]
\centering
\includegraphics[width=0.8\textwidth]{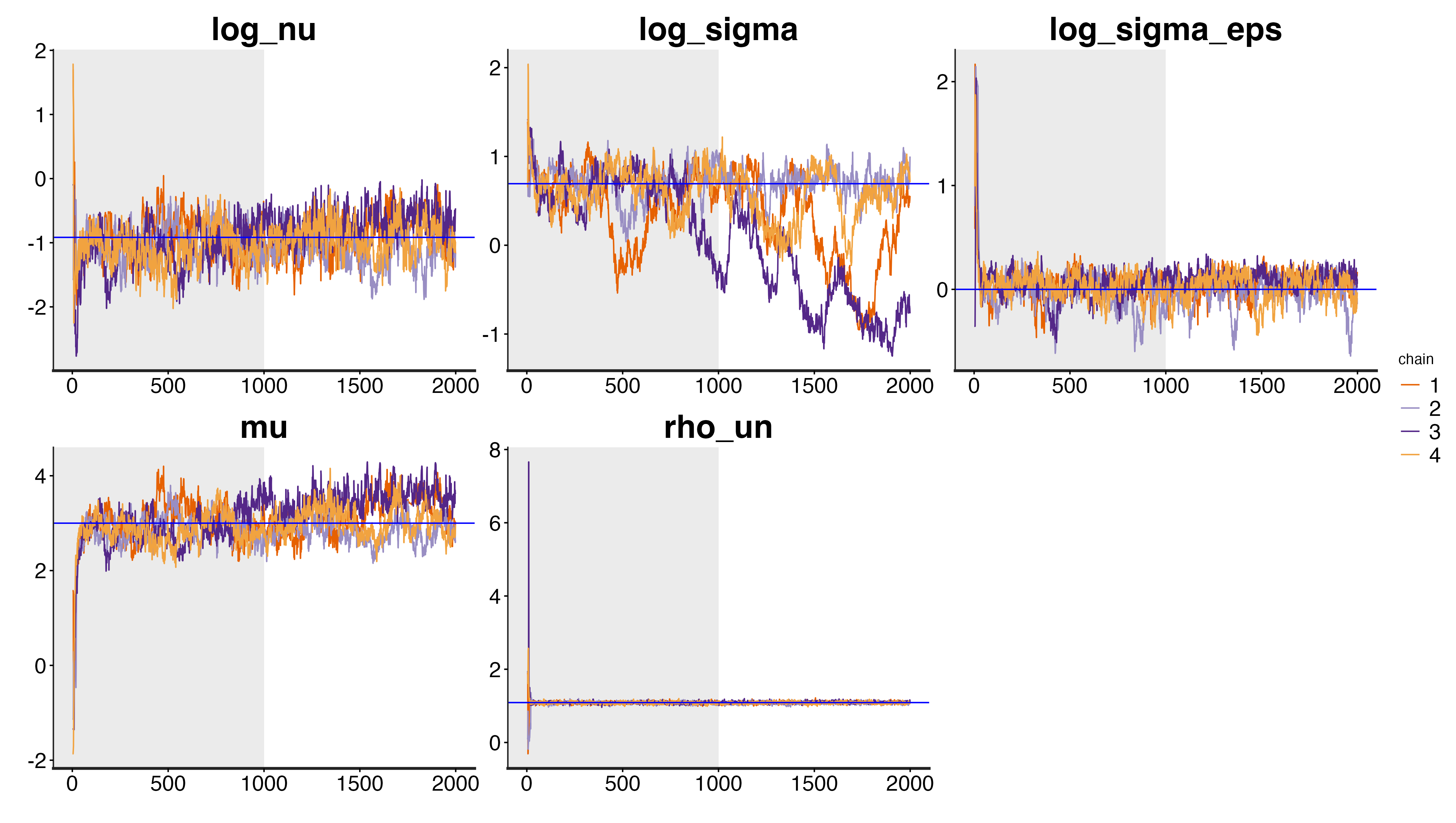}
\caption{Traceplots of the Parameters in the NIG-AR(1) Model using \texttt{stan}. The blue lines indicate the true values.}
\label{fig:traceplot-stan-mcmc}
\end{figure}

\begin{figure}[t]
\centering
\includegraphics[width=0.8\textwidth]{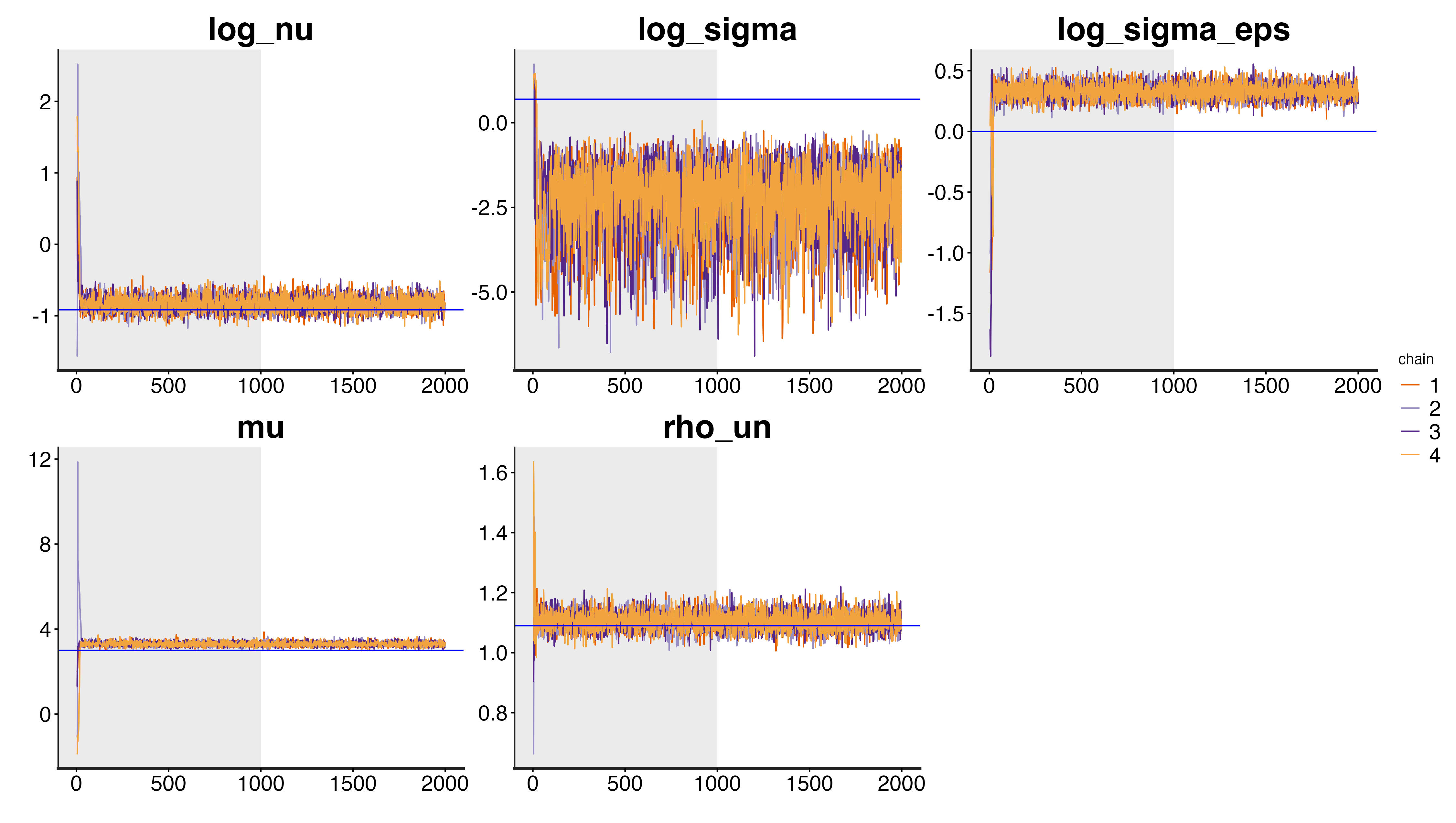}
\caption{Traceplots of the Parameters in the NIG-AR(1) Model using \texttt{tmbstan}. The blue lines indicate the true values.}
\label{fig:traceplot-tmbstan}
\end{figure}

To enforce the stationarity constraint $\rho\in(-1,1)$ during the optimization, we introduce an unconstrained parameter
$\phi\in\mathbb{R}$ defined by
\(
\phi = \log\!\left(\frac{1+\rho}{1-\rho}\right),
\)
which is the inverse of the transform $\rho=\tanh(\phi/2)$.
Table~\ref{tab:tmbstan-comparison} summarizes the computational time, convergence metrics, and the Kullback-Leibler divergence for each method.
Table~\ref{tab:parameter-estimates} reports the parameter estimates.
Tables~\ref{tab:stan-mcmc-summary}--\ref{tab:tmbstan-summary} report standard MCMC summary statistics:
\texttt{mean} and \texttt{sd} are the posterior mean and standard deviation; \texttt{mcse} is the Monte Carlo
standard error of the reported mean estimate; the columns \texttt{2.5\%}, \texttt{25\%}, \texttt{50\%}, \texttt{75\%},
and \texttt{97.5\%} are posterior quantiles; $n_{\text{eff}}$ is the effective sample size, and $\hat{R}$ is the Gelman--Rubin potential scale reduction factor, with values close to 1
indicating good mixing across chains.

\begin{table}[t]
\centering
\caption{Comparison of Estimation Methods for AR(1)-NIG Model}
\label{tab:tmbstan-comparison}
\begin{tabular}{lrrr}
\toprule
\textbf{Method} & \textbf{Time (seconds)} & \textbf{Convergence} & \textbf{Noise KLD} \\
\midrule
ngme2 & 20.0 & Converged (SGD-based rule) & 0.011 \\
Stan MCMC & 37.7 & Poor ($\hat{R} = 2.11$) & 0.075 \\
TMBStan & 360.0 & Converged ($\hat{R} = 1.00$) & 2.026 \\
\bottomrule
\end{tabular}
\end{table}

\paragraph{Computational Efficiency} 
\texttt{ngme2} exhibited the highest efficiency, completing the estimation in 20.0 seconds. This represents a roughly 2-fold speedup over full Stan MCMC (37.7 seconds) and nearly a 17-fold speedup over TMBStan (360.0 seconds). While TMBStan benefits from marginalization, the need to evaluate the dense Hessian for the Laplace approximation at every HMC leapfrog step introduces a massive computational bottleneck. The iterative stochastic optimization in \texttt{ngme2} proves significantly faster and more scalable for this class of latent non-Gaussian models.

\paragraph{Convergence and Mixing} 
As shown in Table~\ref{tab:stan-mcmc-summary}, standard Stan MCMC struggled with the high-dimensional latent space, resulting in divergent transitions, poor mixing (Max $\hat{R} = 2.11$), and an extremely low effective sample size ($n_{\text{eff}} = 4$ for $\log \sigma$). TMBStan successfully resolved these sampling pathologies by marginalizing the latent field via Laplace approximation, yielding perfect chain mixing ($\hat{R} = 1.00$) and high effective sample sizes (Table~\ref{tab:tmbstan-summary}). However, as discussed below, this came at the cost of severe estimation bias. In contrast, \texttt{ngme2} achieved stable convergence without requiring approximations that compromise accuracy.

\paragraph{Estimation Accuracy} 
Since the correlation parameter $\rho$ is well estimated across all methods, we evaluated the accuracy of the noise recovery using the Kullback-Leibler (KL) divergence between the true and estimated distributions of the driving noise. \texttt{ngme2} yielded the closest fit to the ground truth (KLD = 0.011). As seen in Table~\ref{tab:parameter-estimates}, Stan MCMC failed to accurately recover the scale parameter $\sigma$ due to poor posterior exploration. Meanwhile, TMBStan's Laplace approximation forced a Gaussian geometry onto the heavy-tailed NIG latent field. This resulted in a severely collapsed latent variance ($\hat{\sigma} \approx 0.10$) and an artificially inflated measurement error ($\hat{\sigma}_{\epsilon} \approx 1.39$), completely failing to recover the true driving noise distribution (KLD = 2.026).

\begin{table}[t]
\centering
\caption{Parameter Estimates Comparison}
\label{tab:parameter-estimates}
\begin{tabular}{lrrrrr}
\toprule
\textbf{Method} & $\boldsymbol{\mu}$ & $\boldsymbol{\sigma}$ & $\boldsymbol{\rho}$ & $\boldsymbol{\sigma_\epsilon}$ & $\boldsymbol{\nu}$ \\
\midrule
True & 3.000 & 2.000 & 0.800 & 1.000 & 0.400 \\
ngme2 & 3.035 & 1.718 & 0.795 & 1.001 & 0.362 \\
Stan MCMC & 3.140 & 1.264 & 0.796 & 1.041 & 0.409 \\
TMBStan & 3.299 & 0.096 & 0.802 & 1.393 & 0.433 \\
\bottomrule
\end{tabular}
\end{table}

\subsection{Discussion}

The results demonstrate the significant challenges traditional methods encounter when modeling heavy-tailed and asymmetric NIG-driven latent fields. Standard Stan MCMC fails to mix efficiently because of the challenging funnel geometry and strong dependence between the latent variables and variance hyperparameters. 
While the TMB-Stan hybrid successfully mitigates the mixing issues by marginalizing the latent state, it introduces a critical flaw: the underlying Laplace approximation relies on a Gaussian assumption around the mode. This assumption fundamentally fails to capture the complexity of the non-Gaussian latent field, leading to collapsed latent variances and highly biased estimates, all while incurring a massive computational cost. By leveraging stochastic optimization with Rao-Blackwellized Monte Carlo gradients, \texttt{ngme2} avoids both the sampling pathologies of HMC and the approximation biases of Laplace methods, confirming that its optimization-based framework is superior in scalability, robustness, and precision for complex LLnGMs.

\begin{table}[t]
\centering
\caption{Posterior Summary for Stan MCMC}
\label{tab:stan-mcmc-summary}
\begin{adjustbox}{max width=\textwidth}
\begin{tabular}{lrrrrrrrrrr}
\toprule
 & mean & mcse & sd & 2.5\% & 25\% & 50\% & 75\% & 97.5\% & $n_{\text{eff}}$ & $\hat{R}$ \\
\midrule
$\mu$            & 3.140 & 0.163 & 0.392 & 2.471 & 2.828 & 3.121 & 3.431 & 3.921 & 6 & 1.49 \\
$\log(\sigma)$    & 0.235 & 0.301 & 0.589 & -0.982 & -0.179 & 0.442 & 0.715 & 0.951 & 4 & 2.11 \\
$\phi$       & 1.089 & 0.003 & 0.032 & 1.026 & 1.067 & 1.089 & 1.110 & 1.152 & 127 & 1.04 \\
$\log(\sigma_{\epsilon})$ & 0.040 & 0.025 & 0.130 & -0.285 & -0.024 & 0.060 & 0.130 & 0.236 & 28 & 1.12 \\
$\log(\nu)$       & -0.894 & 0.089 & 0.311 & -1.510 & -1.109 & -0.888 & -0.672 & -0.313 & 12 & 1.25 \\
\bottomrule
\end{tabular}
\end{adjustbox}
\end{table}

\begin{table}[t]
\centering
\caption{Posterior Summary for TMBStan (with Laplace Approximation)}
\label{tab:tmbstan-summary}
\begin{adjustbox}{max width=\textwidth}
\begin{tabular}{lrrrrrrrrrr}
\toprule
 & mean & mcse & sd & 2.5\% & 25\% & 50\% & 75\% & 97.5\% & $n_{\text{eff}}$ & $\hat{R}$ \\
\midrule
$\mu$            & 3.299 & 0.002 & 0.099 & 3.116 & 3.233 & 3.294 & 3.361 & 3.503 & 3594 & 1.00 \\
$\log(\sigma)$    & -2.341 & 0.021 & 1.057 & -4.870 & -2.989 & -2.175 & -1.559 & -0.726 & 2571 & 1.00 \\
$\phi$       & 1.105 & 0.000 & 0.030 & 1.047 & 1.085 & 1.105 & 1.124 & 1.164 & 4148 & 1.00 \\
$\log(\sigma_{\epsilon})$ & 0.332 & 0.001 & 0.062 & 0.209 & 0.291 & 0.331 & 0.372 & 0.455 & 4102 & 1.00 \\
$\log(\nu)$       & -0.837 & 0.002 & 0.101 & -1.024 & -0.909 & -0.840 & -0.770 & -0.631 & 3737 & 1.00 \\
\bottomrule
\end{tabular}
\end{adjustbox}
\end{table}

\section*{Acknowledgments}

This publication is based upon work supported by King Abdullah University of Science and Technology (KAUST) under Award No. ORFS-CRG11-2022-5015.

\bibliographystyle{abbrvnat}
\bibliography{References}

\end{document}